\documentclass[10pt,conference]{IEEEtran}

\pdfoutput=1

\usepackage[qm]{qcircuit}
\usepackage{mathtools}
\usepackage{float}
\usepackage[linesnumbered, ruled, vlined]{algorithm2e}
\usepackage{tikz}
\usetikzlibrary{trees, shadows, arrows.meta, shapes.multipart, backgrounds, calc, fit, positioning}
\usetikzlibrary{decorations.pathreplacing}
\usepackage{booktabs}
\usepackage{xcolor}
\usepackage{subcaption}
\usepackage{cite}
\usepackage{amsmath,amssymb,amsfonts}
\usepackage{hyperref}
\usepackage{soul}
\usepackage[capitalise,nameinlink]{cleveref}

\newtheorem{theorem}{Theorem}
\newtheorem{lemma}[theorem]{Lemma}
\newtheorem{proposition}[theorem]{Proposition}
\newtheorem{definition}[theorem]{Definition}
\newtheorem{corollary}[theorem]{Corollary}

\makeatletter
\@ifundefined{proof}{%
  \newenvironment{proof}[1][Proof]{\par\noindent\textit{#1.}\enspace}{\hfill$\square$\par\medskip}
}{}
\makeatother

\definecolor{OIgray}{RGB}{153,153,153}
\colorlet{neutral-left}{OIgray!15!white}
\definecolor{PastelIBMBlue}{RGB}{173,200,255}
\definecolor{PastelIBMPurple}{RGB}{188,180,255}
\definecolor{PastelIBMPink}{RGB}{244,182,204}
\definecolor{PastelIBMOrange}{RGB}{255,194,150}
\definecolor{PastelIBMYellow}{RGB}{255,225,150}
\definecolor{PastelIBMGreen}{RGB}{159,220,180}

\newcommand{\ket}[1]{\lvert #1 \rangle}

\newcommand{\floor}[1]{\lfloor #1 \rfloor}

\newcommand{\poly}{\mathrm{poly}}
\newcommand{\atantwo}{\mathrm{atan2}}

\DeclareMathOperator{\arcsinop}{arcsin}

\title{Efficient Complex-Valued State Preparation\\on Bucket Brigade QRAM}

\author{
\IEEEauthorblockN{Alessandro Berti}
\IEEEauthorblockA{Department of Computer Science and\\Department of Physics\\University of Pisa, Italy\\alessandro.berti@df.unipi.it}
\and
\IEEEauthorblockN{Francesco Ghisoni}
\IEEEauthorblockA{Department of Physics\\University of Pavia, Italy\\francesco.ghisoni01@universitadipavia.it}
}

\begin{document}

\maketitle

\begin{abstract}
Efficient quantum state preparation is a critical component in quantum algorithms that process large classical data, and it is fundamental to realizing quantum advantage in domains such as machine learning, quantum linear algebra, and quantum finance. Building on the framework of~\cite{berti2025efficient}, which integrates Bucket Brigade QRAM (BBQRAM) with a segment tree to achieve amplitude encoding in polylogarithmic query time, we present two improvements within the same architecture-aware framework. First, we remove the $U_{2\mathrm{CR}}$ subroutine by classically precomputing the rotation angles determined by the segment tree and storing these angles directly in the BBQRAM cells. The tradeoff is that the classically loaded QRAM stores precomputed fixed-point angles rather than raw subtree weights. Second, we extend the construction to complex-valued matrices $A \in \mathbb{C}^{M \times N}$ by storing a leaf phase alongside each precomputed rotation angle and using a two-step magnitude-then-phase procedure; the real signed case is naturally subsumed as a one-bit phase specialization. At unchanged $\mathcal{O}(\log_2^2(MN))$ BBQRAM query complexity, the QPU procedure reduces to BBQRAM retrievals and controlled-rotation cascades, with $\mathcal{O}(MN)$ memory cells per matrix and no reversible arithmetic on the QPU. 
\end{abstract}

\begin{IEEEkeywords}
Quantum State Preparation, Amplitude Encoding, Bucket Brigade QRAM, Segment Tree, Complex Amplitudes
\end{IEEEkeywords}

\section{Introduction}\label{sec:introduction}
Efficient quantum state preparation is fundamental to obtaining quantum advantages from quantum algorithms that process large data~\cite{hann2021practicality, luongo2021quantumalgorithms, prakash2014quantum}. Without a preparation procedure whose complexity scales at most polylogarithmically with the data size, the time required for encoding classical data into a quantum state can dominate the overall computational cost, resulting in a \emph{data bottleneck} that negates theoretical speedups of downstream quantum algorithms in machine learning~\cite{biamonte2017quantum, Rebentrost_2014_QSVM, kerenidis2020quantum, Wiebe_2012_data_fitting}, linear algebra~\cite{lloyd2010quantum, Wossnig_2018, bernasconiMatMul, guo2024quantumlinearalgebraneed}, quantum finance~\cite{rebentrost2022_finance, stochastic2022}, and quantum simulations of complex materials and molecules~\cite{Babbush_2018, Berry_2019, fomichev2024}. Given a matrix $A \in \mathbb{C}^{M \times N}$ with entries $a_{i,j}$, the goal of \emph{amplitude encoding} is to prepare the quantum state
\begin{equation}\label{eq:target_state}
\ket{A} \;=\; \frac{1}{\|A\|_F} \sum_{i=0}^{M-1}\sum_{j=0}^{N-1} a_{i,j}\,\ket{i}\ket{j},
\end{equation}
where the normalization factor coincides with the Frobenius norm $\|A\|_F = \sqrt{\sum_{i,j}|a_{i,j}|^2}$ of the matrix $A$, embedding the classical data directly into the amplitudes of a quantum state.

In~\cite{berti2025efficient}, the authors present an efficient state preparation algorithm that integrates the Bucket Brigade QRAM (BBQRAM)~\cite{giovannetti2008quantum, giovannetti2008architectures} with the Segment Tree data structure~\cite{cormen2022introduction}, achieving an explicit $\mathcal{O}(\log_2^2(MN))$ time bound under fault-tolerant assumptions. The key idea is to embed a segment tree of squared norms into the BBQRAM memory cells, and then traverse the tree level by level, using $U_{2\mathrm{CR}}$~\cite{chen_et_al:LIPIcs.ICALP.2023.38} at each level to convert basis-encoded segment-tree values into amplitudes. However, this approach presents two limitations that we address in the present work.

The first limitation concerns the $U_{2\mathrm{CR}}$ subroutine itself. To convert a pair of basis-encoded segment-tree values into a rotation angle, $U_{2\mathrm{CR}}$ requires four reversible arithmetic circuits, namely an adder, a divider, a square root, and an arcsine. Although these circuits have $\tilde{O}(1)$ theoretical cost under fixed precision, they represent the non-trivial source of complexity in the quantum procedure and introduce additional ancilla qubits. As noted in~\cite{berti2025efficient}, this makes $U_{2\mathrm{CR}}$ a non-negligible bottleneck on the path toward a practical fault-tolerant implementation of the algorithm.

The second limitation is the restriction to real-valued matrices $A \in \mathbb{R}^{M \times N}$. Many important applications require encoding complex-valued data: in quantum chemistry and quantum-simulation settings complex-valued data naturally arise~\cite{Babbush_2018}; and quantum linear algebra subroutines~\cite{Wossnig_2018} operate on complex matrices arising from discretizations of differential equations. In the original algorithm, the authors handle negative entries via a single sign bit and a phase-correction step at the end of the procedure; this mechanism encodes only the sign of real entries. Our goal is to give an explicit complex-valued construction within the same architecture-aware framework.

In this work, we address both limitations within the architecture-aware framework. The asymptotic BBQRAM query complexity remains the same as in~\cite{berti2025efficient}; the contribution is instead an architecture-aware simplification of the QPU procedure, together with an explicit complex-valued extension of the memory layout. More precisely, our contributions are:
\begin{enumerate}
\item \textbf{Elimination of $U_{2\mathrm{CR}}$ via classical precomputation.} We observe that the rotation angle applied by $U_{2\mathrm{CR}}$ at each level of the segment tree depends only on values known classically after preprocessing. We therefore precompute these angles classically and store them directly in the BBQRAM memory cells. Each amplitude-preparation iteration then reduces to a BBQRAM retrieval, a cascade of controlled-$R_y$ gates, and an uncomputation. This removes reversible addition, division, square root, and arcsine from the quantum procedure while preserving the $\mathcal{O}(\log_2^2(MN))$ BBQRAM query-time bound. In addition to this query cost, the online procedure contains $k=\log_2(MN)$ controlled-rotation cascades, contributing ${\mathcal{O}}(t\log_2(MN))$ controlled-rotation cost for fixed precision $t$. The improvement is therefore a QPU resource reduction, not a new asymptotic query bound.

\item \textbf{Extension to complex-valued matrices.} We give an explicit complex-amplitude construction within the architecture-aware framework of~\cite{berti2025efficient}. For $A \in \mathbb{C}^{M \times N}$, we decompose each entry in polar form $a_{i,j} = |a_{i,j}|\,e^{i\varphi_{i,j}}$ with $\varphi_{i,j} = \atantwo(\Im(a_{i,j}),\, \Re(a_{i,j}))$. The quantum procedure consists of two steps: magnitude preparation via cascades of controlled-$R_y$ gates, followed by phase encoding via one additional BBQRAM query and a cascade of controlled phase gates. We make the memory layout, zero-entry conventions, and fixed-point phase convention explicit. The total query time remains $\mathcal{O}(\log_2^2(MN))$.
\end{enumerate}

Our complexity statements separate three costs. The \emph{query time} counts the logical latency of coherent BBQRAM retrievals. The QPU resource count includes the address, work, and phase-target registers and the controlled rotations. The one-time classical construction and writing of the BBQRAM contents are counted separately and are intended to be amortized over repeated preparations of the same data.

We summarize the main result in the following informal theorem statement; the formal version appears in Theorem~\ref{thm:state_prep_complex}.
\begin{theorem}[Informal]\label{thm:main_informal}
Given $A \in \mathbb{C}^{M \times N}$ with $M = 2^m$, $N = 2^n$, and a fixed precision $t$, there exists a quantum algorithm that prepares
$\frac{1}{\|A\|_F}\sum_{i,j} a_{i,j}\ket{i}\ket{j}$
in BBQRAM query time $\mathcal{O}(\log_2^2 (MN))$, using $\Theta(\log_2 (MN))$ QPU qubits for fixed $t$, $\mathcal{O}(MN)$ BBQRAM memory cells of $2t$ bits each for complex data (or $t+1$ bits for real signed data), and $\mathcal{O}(MN)$ classical preprocessing time. No reversible arithmetic is required on the QPU.
\end{theorem}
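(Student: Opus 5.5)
The proof is constructive. We build the classical memory layout, give the QPU procedure, prove correctness by induction on the segment-tree levels, and then count resources.

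\emph{Classical preprocessing.} Flatten $A$ into a vector of length $2^k$ with $k=m+n=\log_2(MN)$, indexing entry $(i,j)$ by the bitstring $ij$. Build the segment tree bottom-up. Each leaf stores $|a_{i,j}|^2$, and each internal node $v$ at depth $\ell$ with prefix $b_1\dots b_\ell$ stores $w_v = w_{v0}+w_{v1}$. This takes $2^{k+1}-1=\mathcal{O}(MN)$ additions. For each internal node we precompute
\[
\theta_v = 2\arccos\sqrt{w_{v0}/w_v},
\]
setting $\theta_v=0$ when $w_v=0$. These zero subtrees carry no amplitude, so any angle works there. At each leaf we also store $\varphi_{i,j}=\atantwo(\Im a_{i,j},\Re a_{i,j})$, with $\varphi=0$ by convention for zero entries. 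Both quantities are truncated to $t$-bit fixed point. For real data the phase reduces to a single sign bit, which gives $t+1$ bits per cell. For complex data each cell holds one angle and one phase, which gives $2t$ bits. There are $\mathcal{O}(MN)$ cells, and the whole preprocessing runs in $\mathcal{O}(MN)$ time.

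\emph{Quantum procedure and correctness.} Start with $\ket{0}^{\otimes k}$ on the address register. For $\ell=0,\dots,k-1$, suppose the first $\ell$ qubits hold $\sum_{p}\sqrt{w_p/w_{\mathrm{root}}}\ket{p}$. We query the BBQRAM level-$\ell$ table, addressed by $p$, and write $\ket{\theta_p}$ into a $t$-qubit work register. We then apply the cascade $\prod_{r} C\text{-}R_y(2^{-r}\cdot 2\pi)$, where qubit $r$ of the work register controls a rotation on address qubit $\ell+1$. This implements $R_y(\theta_p)$ on that qubit, so
\[
\sqrt{w_p}\ket{p}\mapsto \sqrt{w_{p0}}\ket{p0}+\sqrt{w_{p1}}\ket{p1},
\]
up to the common normalization. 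Finally we query the same table again to uncompute the work register. The inductive invariant therefore holds at every level. After $k$ levels the state is $\sum_x |a_x|\ket{x}/\|A\|_F$.

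For the phase step, we perform one more query on the leaf table to load $\ket{\varphi_x}$. We apply a cascade of controlled phase gates $P(2\pi 2^{-r})$ on a target qubit held in $\ket{1}$; this kicks back the phase $e^{i\varphi_x}$. We then uncompute with another query. Using $a_x=|a_x|e^{i\varphi_x}$, the result is the target state~\eqref{eq:target_state}.

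\emph{Costs and main obstacle.} The procedure makes $2k+2$ queries. Each BBQRAM query over a tree of depth at most $k$ has latency $\mathcal{O}(k)$, for a total of $\mathcal{O}(k^2)=\mathcal{O}(\log_2^2(MN))$. The QPU holds the $k$ address qubits, $t$ work qubits and one phase target, which is $\Theta(\log_2(MN))$ qubits for fixed $t$. Only controlled rotations and memory queries are used, so no reversible arithmetic is needed. The main obstacle is the correctness of the level-$\ell$ query. A superposed address of length $\ell<k$ must route into a distinct per-level table, either through $k$ separate BBQRAMs or through a single memory with level-prefixed addressing. The fan-out of amplitude from the routers must factor out cleanly, leaving the data qubit entangled only as $\ket{p}\ket{\theta_p}$. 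I would handle this by invoking the query model of~\cite{berti2025efficient} and exhibiting the layout explicitly: level $\ell$ occupies address block $2^\ell,\dots,2^{\ell+1}-1$, which is the standard heap indexing of the tree. Truncation error from the fixed-point angles is absorbed into the meaning of ``fixed precision $t$''.
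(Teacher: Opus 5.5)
Your overall plan matches the paper's: classically precomputed splitting angles (your $2\arccos\sqrt{w_{v0}/w_v}$ is the paper's $2\arcsin\sqrt{T_R/(T_L+T_R)}$), one controlled-$R_y$ cascade per level, a final phase query with a controlled-$P$ cascade, and $2k+2$ queries of latency $\mathcal{O}(k)$ each. The gap is the step you yourself flag as the main obstacle, and the fix you commit to does not work with your procedure.

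You rotate address qubit $\ell+1$ in place, so at level $\ell$ the register holds $\mathrm{bin}_\ell(p)$ followed by zeros. In a single memory, a node and its left child then present the same address, because the appended bit is $0$; this holds whichever significance order you use. Declaring that level $\ell$ occupies cells $2^\ell,\dots,2^{\ell+1}-1$ therefore does not turn your register into the heap address $0^{k-\ell-1}1\,\mathrm{bin}_\ell(p)$. The qubit about to be rotated cannot serve as the level marker either. The uncompute query must see the same address as the compute query, and that address must not depend on the freshly created bit. Citing \cite{berti2025efficient} does not close this, since its routing invariant is tied to its shift schedule, not to in-place rotation.

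The missing idea is the paper's marker-and-shift mechanism. Rotate a separate qubit $\mathrm{v}$, and initialize the address to $\ket{0^{k-1}1}$, which is heap index $1$ (the root). After each uncompute, apply the left circular shift of Definition~\ref{def:circular_shift} on $(\mathrm{v},\mathrm{a})$. By Lemma~\ref{lem:routing_marker}, the address after $h$ iterations is $0^{k-h-1}1\,\mathrm{bin}_h(p)=2^h+p$, and it stays fixed between compute and uncompute. After $k$ iterations the marker has moved into $\mathrm{v}$, so $\mathrm{v}=\ket{1}$ serves as the phase target and $\mathrm{a}$ holds the leaf index $z$.

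This is what justifies your $2t$-bit (or $(t+1)$-bit) cells. Cell $z$ holds both the angle of heap node $z$ and the phase of leaf $z$, and the same $K$ cells serve the phase step. A query then loads both fields, so you need $2t$ work qubits, not $t$; this is still $\Theta(k)$ for fixed $t$.

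Alternatively, commit to your other option: per-level BBQRAMs of $2^\ell$ cells addressed by the first $\ell$ qubits, plus a $K$-cell phase table. That is a valid, genuinely different route. It needs no shift network, uses cells of at most $t$ bits, has $2K-1=\mathcal{O}(MN)$ cells in total, and level-$\ell$ queries cost $\mathcal{O}(\ell)$. The price is $k+1$ separate memory devices, and you must drop the claim that one cell packs an angle and a phase.

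In either version, store $\varphi\bmod 2\pi$ (or, equivalently, its two's-complement bit pattern). Your unsigned cascade of $P(2\pi 2^{-r})$ gates reads the register as a value in $[0,2\pi)$.
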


We organize the manuscript as follows. In Section~\ref{sec:related_works}, we review related works on quantum state preparation, organized by the QRAM model each work assumes. In Section~\ref{sec:preliminaries}, we recall the necessary background on basis and amplitude encodings, the cascade of controlled rotations, the BBQRAM architecture, and the Segment Tree data structure. Sections~\ref{sec:eliminating_u2cr} and~\ref{sec:complex_extension} present our two main contributions, namely the elimination of $U_{2\mathrm{CR}}$ and the extension to complex-valued inputs, respectively. Eventually, Section~\ref{sec:conclusion} summarizes the main contributions and discusses directions for future research. A complete numerical walkthrough is available in Section~\ref{sec:numerical_example}.
\section{Related Work}\label{sec:related_works}

\begin{table}[t]
\centering
\caption{Comparison of quantum state preparation methods for $N$-dimensional vectors or $K = MN$-entry matrices. \textbf{Qubits} counts the qubits required on the QPU (output register and ancillas) for the preparation procedure. \textbf{Complex} indicates support for arbitrary complex amplitudes. \textbf{QRAM} denotes the QRAM model: \emph{oracle} treats it as an abstract black-box primitive, \emph{circuit} implements the Bucket Brigade architecture inside the quantum circuit, \emph{hardware} models the Bucket Brigade QRAM as a separate device, and \emph{no} indicates QRAM-free methods.}
\label{tab:methods}
\scriptsize
\begin{tabular}{lcccc}
\toprule
\textbf{Method} & \textbf{Time} & \textbf{Qubits} & \textbf{Complex} & \textbf{QRAM} \\
\midrule
\cite{Mottonen_2004} & $\mathcal{O}(2^n)$ & $\mathcal{O}(\log_2 N)$ & yes & no \\
\cite{kerenidis_et_al:LIPIcs.ITCS.2017.49} & $\mathcal{O}(\operatorname{polylog}(MN))$ & $\mathcal{O}(\log_2(MN))$ & no & oracle \\
\cite{prakash2014quantum} & $\tilde{\mathcal{O}}(\sqrt{\mathrm{nnz}(x)})$ & $\mathcal{O}(\log_2 N)$ & no & oracle \\
\cite{PhysRevA.110.032439} & $\mathcal{O}(s \log_2 N)$ & $\mathcal{O}(\log_2 N)$ & no & circuit \\
\cite{casares2020} & $\mathcal{O}(\operatorname{polylog}(MN))$ & $\mathcal{O}(\log_2(MN))$ & no & circuit \\
\cite{Zhang_2025} & $\mathcal{O}(\log_2 N)$ & $\mathcal{O}(N)$ & no & circuit \\
\cite{Clader_2022} & $\mathcal{O}(\log_2 N)$ & $\mathcal{O}(N)$ & no & circuit \\
\cite{berti2025efficient} & $\mathcal{O}(\log_2^2(MN))$ & $\mathcal{O}(\log_2(MN))$ & no & hardware \\
\addlinespace
\hline
\addlinespace
\textbf{This work} & $\mathcal{O}(\log_2^2(MN))$ & $\mathcal{O}(\log_2(MN))$ & \textbf{yes} & hardware \\
\bottomrule
\end{tabular}
\end{table}
This section reviews related works on quantum state preparation. We organize the discussion by the QRAM model each work assumes, progressing from abstract oracle treatments, through circuit-level implementations, to hardware-based architectures, and finally to QRAM-free approaches. Table~\ref{tab:methods} summarizes the comparison.

In~\cite{kerenidis_et_al:LIPIcs.ITCS.2017.49}, the authors introduce a data structure (the ``KP-tree'', equivalent to the standard segment tree, as observed in~\cite{berti2025efficient}) that enables amplitude encoding of a matrix $A \in \mathbb{R}^{M \times N}$ in $\mathcal{O}(\operatorname{polylog}(MN))$ time, treating each QRAM access as a unit-cost oracle. A separate phase oracle could in principle be added in such an abstract model; our claim is narrower, namely an explicit complex-valued layout and routing-compatible procedure in the hardware-BBQRAM model. In~\cite{prakash2014quantum}, the author extends this approach to sparse vectors by combining two oracle QRAM instances with a classical key--value map. Such approaches treat the QRAM as an abstract black-box primitive and do not account for routing overhead.

Several works move beyond the oracle abstraction and analyze the preparation procedure at the circuit level, treating the BBQRAM as a set of ancillary qubits that interact with the main register~\cite{PhysRevA.110.032439, PhysRevLett.129.230504}. In~\cite{casares2020}, the author adopts the Kerenidis--Prakash scheme within a circuit-level BBQRAM architecture but does not explicitly analyze the routing overhead; the overall time complexity remains $\mathcal{O}(\operatorname{polylog}(MN))$. In~\cite{Zhang_2025}, the authors encode the amplitudes of a vector using an explicit circuit-level BBQRAM model in $\mathcal{O}(\log_2 N)$ depth while requiring $\mathcal{O}(N)$ ancillary qubits. In~\cite{Clader_2022}, the authors study block encodings of matrices using a circuit-level BBQRAM with $\mathcal{O}(\log_2 N)$ depth, again at the cost of $\mathcal{O}(N)$ ancillas. We observe that, as discussed in~\cite{Jaques2025qramsurveycritique}, any fault-tolerant circuit-based BBQRAM requires $\Omega(2^n)$ total logical gates and $\Omega(\sqrt{2^n})$ non-Clifford gates, which undermines the polylogarithmic promise of the circuit model.

In contrast to the above approaches, in~\cite{berti2025efficient} the authors treat the QRAM as a \emph{separate hardware device} interacting with the quantum processing unit (QPU). The authors embed a segment tree of squared norms within the BBQRAM memory cells, prove an explicit $\mathcal{O}(\log_2^2(MN))$ time bound, and account for routing overhead via pipelined access. Their algorithm relies on $U_{2\mathrm{CR}}$~\cite{chen_et_al:LIPIcs.ICALP.2023.38} to convert pairs of basis-encoded segment-tree values into amplitude rotations, and targets only real-valued matrices. Our second contribution extends this particular architecture-aware framework to arbitrary complex phases.

For completeness, we also mention two families of methods that avoid QRAM entirely. Grover--Rudolph-style constructions~\cite{grover2002} and the uniformly controlled rotations of M\"{o}tt\"{o}nen et al.~\cite{Mottonen_2004} decompose an arbitrary $n$-qubit state into a tree of magnitude rotations followed by phase rotations, at the cost of $\mathcal{O}(2^n)$ gates for dense states. Variational approaches~\cite{Cerezo_2021, berti2024variational, PhysRevA.110.052615} parametrize the preparation circuit and optimize classically, but without general convergence guarantees. In contrast, our contribution provides a polylogarithmic-query implementation of the same amplitude-then-phase pattern within the architecture-aware framework of~\cite{berti2025efficient}, while simultaneously removing the reversible arithmetic of $U_{2\mathrm{CR}}$ from the QPU procedure. We therefore position the paper as a BBQRAM-query-model refinement rather than as a new dense-state gate-complexity bound.

\section{Preliminaries}\label{sec:preliminaries}
This section recalls the encodings, building blocks, and data structures used throughout the paper. We assume basic knowledge of quantum computing; for an introduction to the subject, we refer the reader to~\cite{nielsen2010quantum}. Throughout, a quantum register is denoted $\ket{\psi}_{\mathrm{name}}^{\mathrm{size}}$, where the subscript identifies the name of the register and the superscript indicates its size in qubits; size and name are omitted when clear from context. Section~\ref{sec:encodings} recalls the basis and amplitude encodings. Section~\ref{sec:cascade} introduces the cascade of controlled rotations, which we use extensively in the two contributions. Section~\ref{sec:bbqram} reviews the BBQRAM architecture with a focus on its retrieval cost, and Section~\ref{sec:segment_tree} recalls the Segment Tree data structure.

\subsection{Basis Encoding and Amplitude Encoding}\label{sec:encodings}
We first introduce two notions for encoding classical data into quantum computers: \emph{basis encoding} (Definition~\ref{def:basis_encoding}) and \emph{amplitude encoding} (Definition~\ref{def:amplitude_encoding}).
\begin{definition}[Basis encoding]\label{def:basis_encoding}
A $t$-bit string $x = x_{t-1}\dots x_1 x_0$ is represented as the computational basis state $\ket{x} = \ket{x_{t-1}\dots x_1 x_0}$, where the leftmost qubit is the most significant.
\end{definition}

\begin{definition}[Amplitude encoding]\label{def:amplitude_encoding}
Given a vector $x = [x_0, \dots, x_{N-1}] \in \mathbb{C}^N$ (padded to $N = 2^n$), its amplitude encoding is
$\ket{x} = \frac{1}{\|x\|_2}\sum_{j=0}^{N-1} x_j\,\ket{j}^n$.
\end{definition}

\noindent For a matrix $A \in \mathbb{C}^{M \times N}$ with $M = 2^m$ and $N = 2^n$, we adopt row-major indexing: $a_z = a_{i,j}$ with $z = i \cdot N + j$. We write $K = MN$ and $k = \log_2 K = m + n$. The amplitude encoding of $A$ reads
$\ket{A} = \frac{1}{\|A\|_F}\sum_{z=0}^{K-1} a_z\,\ket{z}^k$.

\subsection{Cascade of Controlled Rotations}\label{sec:cascade}
A central building block is the \emph{cascade of controlled rotation gates}, which decomposes a single-qubit rotation parametrized by an angle into a sequence of controlled rotations, each conditioned on one bit of the fixed-point angle register. Figure~\ref{fig:cascade} illustrates the magnitude cascade. We use two unsigned fixed-point conventions. Magnitude angles $\theta \in [0,\pi]$ are stored in a $t$-bit register representing multiples of $\Delta_\theta = 2^{2-t}$ in the interval $[0,4)$. Phases are stored modulo $2\pi$: for $\varphi \in (-\pi,\pi]$, the memory cell stores $\bar{\varphi} = \varphi \bmod 2\pi \in [0,2\pi)$ as a $t$-bit value representing multiples of $\Delta_\varphi = 2\pi/2^t$. Since $P(\varphi)$ is $2\pi$-periodic, using $\bar{\varphi}$ implements the same phase up to discretization error.

\begin{definition}[Cascade of controlled rotations]\label{def:cascade}
Let $a=a_{t-1}\cdots a_0$ be a $t$-bit fixed-point register, with $a_j$ the bit at position $j$ (so $a_{t-1}$ is most significant and $a_0$ least significant), representing $x=\sum_{j=0}^{t-1}a_j\,\Delta_j$, and let $b$ be a single target qubit. For any single-qubit rotation $R$ satisfying $R(\alpha+\beta)=R(\alpha)R(\beta)$, the cascade
\[
\mathrm{C}_{a}R_{\mapsto b}(x)
= \prod_{j=0}^{t-1}\mathrm{C}_{a_j}R_{\mapsto b}(\Delta_j)
\]
implements $R(x)$ on $b$. For magnitude angles we use $\Delta_j=2^{\,j+2-t}$, so the most significant bit $a_{t-1}$ controls $R(2)$ and the least significant bit $a_0$ controls $R(2^{2-t})$, matching Figure~\ref{fig:cascade}; for phases we use the corresponding modulo-$2\pi$ increments.
\end{definition}

\begin{lemma}[Cascade equivalence]\label{lem:cascade}
Under the conditions of Definition~\ref{def:cascade}, the cascade implements the rotation $R(\theta)$ on target qubit~$b$ with depth $\mathcal{O}(t)$. For a cascade of controlled $R_y$ gates:
$$\ket{\theta}_a^t\,\ket{0}_b \;\xmapsto{\;\text{Cascade } R_y\;}\; \ket{\theta}_a^t\bigl(\cos\frac{\theta}{2}\ket{0}_b + \sin\frac{\theta}{2}\ket{1}_b\bigr).$$
\end{lemma}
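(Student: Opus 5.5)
The plan is to argue on computational basis states of the angle register and then extend by linearity. Fix a basis state $\ket{a}_a^t$ with bits $a_{t-1}\cdots a_0$, representing $x=\sum_j a_j\Delta_j$.

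Each factor $\mathrm{C}_{a_j}R_{\mapsto b}(\Delta_j)$ acts on $\ket{a}_a\ket{\psi}_b$ as $\ket{a}_a\,R(\Delta_j)^{a_j}\ket{\psi}_b$. This uses $R(0)=I$, which follows from the additivity hypothesis: $R(0)=R(0)R(0)$ and $R(0)$ is unitary. The control register is never modified, since every gate in the cascade is diagonal on the controls. Composing the $t$ factors therefore gives
\[
\ket{a}_a\,\prod_{j}R(a_j\Delta_j)\ket{\psi}_b .
\]
All $R(\cdot)$ commute, because $R(\alpha)R(\beta)=R(\alpha+\beta)=R(\beta)R(\alpha)$, so the ordering of the product is irrelevant. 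Repeated use of additivity collapses the product to $R\bigl(\sum_j a_j\Delta_j\bigr)=R(x)$. Linearity then extends the statement to superpositions: the cascade equals $\sum_a \ket{a}\!\bra{a}\otimes R(x_a)$.

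The depth claim comes directly from the structure of the cascade. It has exactly $t$ two-qubit controlled gates, all on the common target $b$, so they execute sequentially in depth $t=\mathcal{O}(t)$.

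For the $R_y$ instance, the hypothesis to check is $R_y(\alpha)R_y(\beta)=R_y(\alpha+\beta)$. This holds because $R_y(\alpha)=e^{-i\alpha Y/2}$ is a one-parameter group generated by the same Pauli $Y$. Using the magnitude increments $\Delta_j=2^{j+2-t}$, a register whose integer value is $\sum_j a_j2^j$ encodes $\theta=\Delta_\theta\sum_j a_j 2^j$, which is exactly $x$. Applying $R_y(\theta)$ to $\ket{0}$ gives $\cos\frac{\theta}{2}\ket{0}+\sin\frac{\theta}{2}\ket{1}$, which is the displayed map. Since $\theta\in[0,\pi]\subset[0,4)$ is representable, no overflow occurs. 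The phase case is analogous, with $P(\varphi)$ additive and the modulo-$2\pi$ increments consistent by periodicity.

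The only point needing care, and the closest thing to an obstacle, is the identification $x=\theta$. This requires the convention that $\ket{\theta}_a^t$ means the $t$-bit fixed-point representation with step $\Delta_\theta$. If $\theta$ is not exactly representable, the statement should be read with $\theta$ replaced by its $t$-bit truncation; the discretization error is then at most $\Delta_\theta$ in angle, hence $\mathcal{O}(2^{-t})$ in the output amplitudes.
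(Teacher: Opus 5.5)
Your proof is correct and takes the same route as the paper. The paper gives no separate proof; its only justification is the additivity $R_y(\alpha+\beta)=R_y(\alpha)R_y(\beta)$ applied to the bit decomposition $\theta=\sum_j a_j\Delta_j$ (caption of Figure~\ref{fig:cascade}), and your basis-state argument with linearity, the observation $R(0)=I$, and the commutativity remark make that justification explicit.
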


In this work, we use cascades of controlled $R_y$ gates for amplitude preparation and cascades of controlled $P$ gates for phase encoding.

\begin{figure}[t]
\centering
\resizebox{0.9\columnwidth}{!}{%
\Qcircuit @C=0.9em @R=1.6em  {
    & \qw & \ctrl{5}  & \qw     &  \qw   & \qw    & \qw   & \qw  & \ldots   &  & \qw     & \qw \\
    &\qw & \qw  & \qw      & \ctrl{4}  & \qw    & \qw     & \qw  & \ldots   &  & \qw    & \qw \\
    &\qw & \qw  & \qw      & \qw   & \qw    & \ctrl{3}   & \qw  & \ldots   &  & \qw      & \qw \\
    & \smash[b]{\vdots}   &     &          &      &        &      &      &   \smash[b]{\vdots}     &  &         &  \\
    &\qw & \qw   & \qw      & \qw   & \qw    & \qw    & \qw  & \ldots  &  & \ctrl{1}     & \qw
        \inputgroupv{1}{5}{0.8em}{4.5em}{\ket{\theta}^t_a} \\
    \lstick{\ket{0}_b}&\qw & \gate{R_y(2^1)} & \qw      & \gate{R_y(2^0)} & \qw    & \gate{R_y(2^{-1})}  & \qw  & \ldots   &  & \gate{R_y(2^{2-t})}& \qw
}%
}
\caption{Cascade of controlled rotations (Definition~\ref{def:cascade}). Each bit $a_i$ of the $t$-qubit register $\ket{\theta}^t_a$ controls one $R_y$ rotation on the target $\ket{0}_b$; their product reconstructs $R_y(\theta)$ via $R_y(\alpha+\beta)=R_y(\alpha)R_y(\beta)$.}
\label{fig:cascade}
\end{figure}
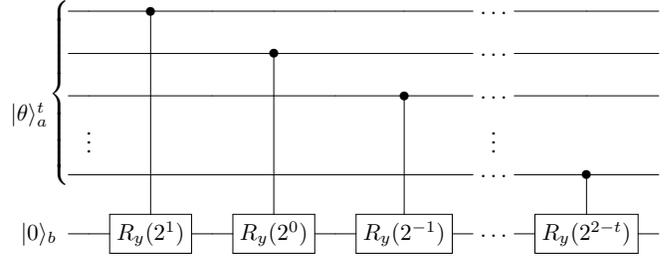

\subsection{Bucket Brigade QRAM}\label{sec:bbqram}
The BBQRAM is a physical device arranged as a binary tree of depth $n$ with $N = 2^n$ leaves (\emph{memory cells}) and $N-1$ internal nodes (\emph{switches}). Internal nodes act as routing switches that direct address qubits towards the target memory cell, while the leaves store the data values, which can be either classical or quantum bits. Figure~\ref{fig:bbqram} illustrates this architecture for $N = 8$ memory cells. The device supports queries of the form defined in Definition~\ref{def:bbqram_query}.
\begin{definition}[BBQRAM query]\label{def:bbqram_query}
A BBQRAM with $N = 2^n$ memory cells, each storing a $t$-qubit string $x_i$, implements
$$\mathsf{QRAM}\!:\; \ket{i}_{\mathrm{a}}^n\,\ket{b}_{\mathrm{d}}^t \;\longmapsto\; \ket{i}_{\mathrm{a}}^n\,\ket{b \oplus x_i}_{\mathrm{d}}^t,$$
where $\ket{i}^{n}_{\mathrm{a}}$ for $i \in \mathbb{N}$ denotes an address register of size $n$, thus indexing $N=2^n$ values, and $\ket{b \oplus x_i}^{t}_{\mathrm{d}}$ denotes the value register where $b$ is any bit string and $x_i \in \{0,1\}^t$, where $t \in \mathbb{N}$, represents the value associated with the address $\ket{i}_{\mathrm{a}}$
\end{definition}

\begin{figure}[t]
  \centering
  \resizebox{\columnwidth}{!}{%
    \begin{tikzpicture}[
        every label/.append style={font=\small},
        node/.style={
          draw, fill=white, circle, inner sep=0pt, font=\small,
          drop shadow, text width=1.5em, align=center
        },
        leaf/.style={
          shape=rectangle, rounded corners, draw, fill=white,
          inner sep=2pt, minimum width=3em, minimum height=1.5em,
          font=\small, align=center
        }
      ]
      \def\dx{2.8} \def\dy{0.8} \def\dxii{1.4} \def\leafsep{0.7}

      \node[node] (root) at (0,0) {$\ket{\bullet}$};
      \coordinate (above) at ($(root)+(0,\dy/1.5)$);
      \draw (above) -- (root);

      \node[node] (L)  at ($(root)+(-\dx,-\dy)$) {$\ket{\bullet}$};
      \node[node] (R)  at ($(root)+(\dx,-\dy)$)  {$\ket{\bullet}$};
      \node[node] (LL) at ($(L)+(-\dxii,-\dy)$) {$\ket{\bullet}$};
      \node[node] (LR) at ($(L)+(\dxii,-\dy)$)  {$\ket{\bullet}$};
      \node[node] (RL) at ($(R)+(-\dxii,-\dy)$) {$\ket{\bullet}$};
      \node[node] (RR) at ($(R)+(\dxii,-\dy)$)  {$\ket{\bullet}$};

      \node[leaf] (LLL) at ($(LL)+(-\leafsep,-\dy)$) {$\ket{x_0}$};
      \node[leaf] (LLR) at ($(LL)+(\leafsep,-\dy)$)  {$\ket{x_1}$};
      \node[leaf] (LRL) at ($(LR)+(-\leafsep,-\dy)$) {$\ket{x_2}$};
      \node[leaf] (LRR) at ($(LR)+(\leafsep,-\dy)$)  {$\ket{x_3}$};
      \node[leaf] (RLL) at ($(RL)+(-\leafsep,-\dy)$) {$\ket{x_4}$};
      \node[leaf] (RLR) at ($(RL)+(\leafsep,-\dy)$)  {$\ket{x_5}$};
      \node[leaf] (RRL) at ($(RR)+(-\leafsep,-\dy)$) {$\ket{x_6}$};
      \node[leaf] (RRR) at ($(RR)+(\leafsep,-\dy)$)  {$\ket{x_7}$};

      \draw (root) -- ++(-\dx,0) -- (L);  \draw (root) -- ++(\dx,0)  -- (R);
      \draw (L)  -- ++(-\dxii,0) -- (LL); \draw (L)  -- ++(\dxii,0) -- (LR);
      \draw (R)  -- ++(-\dxii,0) -- (RL); \draw (R)  -- ++(\dxii,0) -- (RR);
      \draw (LL) -- ++(-\leafsep,0) -- (LLL); \draw (LL) -- ++(\leafsep,0) -- (LLR);
      \draw (LR) -- ++(-\leafsep,0) -- (LRL); \draw (LR) -- ++(\leafsep,0) -- (LRR);
      \draw (RL) -- ++(-\leafsep,0) -- (RLL); \draw (RL) -- ++(\leafsep,0) -- (RLR);
      \draw (RR) -- ++(-\leafsep,0) -- (RRL); \draw (RR) -- ++(\leafsep,0) -- (RRR);
    \end{tikzpicture}
  }
  \caption{BBQRAM architecture for $N = 8$ memory cells. Circles are routing switches initialized in the wait state $\ket{\bullet}$; rectangles are memory cells, which can be either classical or quantum bits, storing $\ket{x_i}$.}
  \label{fig:bbqram}
\end{figure}
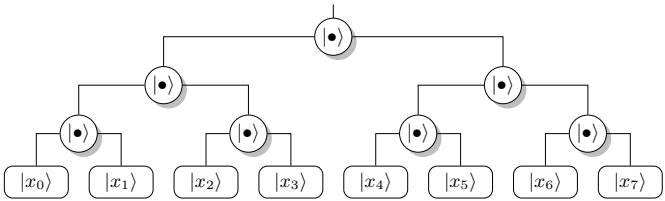

With \emph{pipelined routing}~\cite{hann2021resilience, xu2023systems}, a single retrieval costs $\mathcal{O}(\log_2 N)$ time, as formalized in Lemma~\ref{lem:bbqram_cost}.

\begin{lemma}[BBQRAM retrieval cost]\label{lem:bbqram_cost}
With \emph{pipelined routing}, the retrieval of a memory cell (or of a superposition of cells) from a BBQRAM with $N = 2^n$ memory cells costs $\mathcal{O}(\log_2 N)$ time.
\end{lemma}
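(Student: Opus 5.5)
The cost is split into three phases that the pipelined bucket-brigade protocol of~\cite{hann2021resilience, xu2023systems} executes in sequence: \emph{address loading}, \emph{data copying}, and \emph{unloading}. We bound the latency of each phase by $\mathcal{O}(n)$ with $n=\log_2 N$, working in the logical-latency model of the introduction, in which every elementary routing step (a swap between a node and one of its children, or a controlled-swap at a switch) takes unit time. Linearity of superposition then carries the bound from a basis address $\ket{i}^n_{\mathrm{a}}$ to an arbitrary superposition of addresses. The reason is that the schedule, meaning which layer acts at which time step, is fixed and does not depend on the address value. So each branch of the superposition runs for the same number of steps.

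\textbf{Address loading.} The $n$ address qubits are sent into the root one after another. The $\ell$-th address qubit travels down $\ell$ levels along the path already set by the previously loaded qubits, and is then absorbed by the switch in the wait state $\ket{\bullet}$ that it reaches. Without pipelining, qubit $\ell$ would only be injected after qubit $\ell-1$ had settled, giving $\sum_{\ell}\ell=\mathcal{O}(n^2)$. With pipelining, qubit $\ell+1$ is injected a constant number of steps after qubit $\ell$. At any moment, different qubits occupy disjoint layers of the tree, so their swaps act on disjoint switches and commute. The last address qubit therefore enters at time $\mathcal{O}(n)$, reaches depth $n-1$ at time $\mathcal{O}(n)+\mathcal{O}(n)$, and loading ends after $\mathcal{O}(n)$ steps.

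\textbf{Data copying and unloading.} The $t$ bus qubits are routed down the path now carved by the active switches to the leaf $x_i$. Each leaf applies $\ket{b}\mapsto\ket{b\oplus x_i}$, which is a CNOT for classical cells, and the bus qubits are routed back up. The bus qubits can also be pipelined, so this phase takes $\mathcal{O}(n+t)$, which is $\mathcal{O}(n)$ for fixed precision $t$. Unloading runs the address-loading schedule in reverse. It resets every switch to $\ket{\bullet}$ and disentangles the router from $\ket{i}^n_{\mathrm{a}}$, so the overall map is exactly the one in Definition~\ref{def:bbqram_query}, again in $\mathcal{O}(n)$. Adding up the three phases gives $\mathcal{O}(\log_2 N)$.

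\textbf{Main obstacle.} The hard part is justifying the pipelining claim: showing that injecting qubits a constant number of steps apart never makes two operations compete for the same switch. I would prove this with an explicit schedule and an induction invariant: at time step $s$, qubit $\ell$ is at depth $s-c\ell$ for a fixed constant $c$. Under this invariant, two distinct qubits are always at different depths, so their operations act on different tree layers. If this turns out to be hard to make fully rigorous in a few lines, the fallback is to state the lemma as an architectural assumption and cite~\cite{hann2021resilience, xu2023systems}, where the schedule and its $\mathcal{O}(\log_2 N)$ latency are established.
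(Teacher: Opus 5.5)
The paper gives no proof of this lemma. It states it as a property of the hardware BBQRAM model and imports it from \cite{hann2021resilience, xu2023systems}, so your fallback is exactly what the paper does.

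Your main argument is a different route: it reconstructs the standard reasoning behind those citations. You split the query into pipelined address loading, data copying, and reverse unloading, each of latency $\mathcal{O}(n)$. Because the schedule does not depend on the address, the bound transfers to superpositions by linearity. This is sound. It gives a self-contained justification that exposes the constants and the dependence on cell width. The paper's citation instead keeps the lemma as a model assumption and avoids committing to a particular router implementation.

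If you write it out, tighten two points.

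\textbf{Spacing constant.} Your invariant needs $c\ge 2$, not an arbitrary constant. With $c=1$, two consecutive in-flight address qubits sit at adjacent depths. In the same step, one is swapped out of the input mode of a depth-$d$ node while the other is swapped into that same mode. The two controlled swaps then do not act on disjoint modes. So ``different depths'' is not enough; you need non-adjacent depths, or equivalently an even/odd alternation of layers. With $c\ge 2$ your check that every router is set before a later qubit passes through it goes through.

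\textbf{Cell width.} Pipelining the $t$ bus qubits gives $\mathcal{O}(n+t)$, which proves the lemma only for fixed $t$. The paper's remark after the lemma says cell width affects the data-register size but not the routing depth. That corresponds to carrying the $t$ data bits in parallel along the carved path, rather than pipelining them. The paper applies the lemma only at fixed precision, so either reading suffices there. Make explicit which one you adopt.
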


All asymptotic time bounds in this paper are BBQRAM \emph{query-time} bounds in the hardware model of Lemma~\ref{lem:bbqram_cost}. The routing depth depends on the number of memory cells, while the cell width appears separately in the data-register size and memory footprint. Thus a $2t$-bit complex cell has the same asymptotic routing depth as a $t$-bit angle cell, but requires twice the data width and twice the memory bits. We count QPU qubits and gates separately from BBQRAM memory cells and switches. The one-time classical preprocessing and memory writing are outside the query time and are amortized over repeated preparations of the same data.

\subsection{Segment Tree}\label{sec:segment_tree}
The segment tree~\cite{de2008computational} is a complete binary tree where each node represents a segment of a vector. The leaves store individual values, and internal nodes store the result of an associative operation (in our case, summation) over their children. For $N$ values, the segment tree has $2N-1$ nodes, depth $\log_2 N$, and can be built in $\mathcal{O}(N)$ time. We note that we use the segment tree solely as a \emph{bottom-up sum aggregation tree}: each internal node stores the sum of its children's values, and we do not require range queries, lazy propagation, or dynamic updates. In the context of quantum state preparation, we construct a segment tree such that each leaf stores the squared modulus $|a_z|^2$ of a matrix entry, while each internal node stores the sum of squared moduli over its subtree, which represents the total probability weight of that branch. Definition~\ref{def:segment_tree} formalizes this construction.

\begin{definition}[Segment tree of squared moduli]\label{def:segment_tree}
For $A \in \mathbb{C}^{M \times N}$ with row-major indexing, $K = MN$, the segment tree $T$ is a complete binary tree of depth $k = \log_2 K$. A node at height $h \in [0, k]$, position $p \in [0, 2^h)$ stores
\[
T_{h,p} \;=\; \sum_{z = p \cdot K/2^h}^{(p+1)\cdot K/2^h - 1} |a_z|^2,
\]
with leaves $T_{k,p} = |a_p|^2$, root $T_{0,0} = \|A\|_F^2$, and parent--child relation $T_{h,p} = T_{h+1,2p} + T_{h+1,2p+1}$.
\end{definition}

Definition~\ref{def:segment_tree} generalizes the definition in~\cite{berti2025efficient} from $|a_z|^2$ with $a_z \in \mathbb{R}$ to $|a_z|^2 = \Re(a_z)^2 + \Im(a_z)^2$ for $a_z \in \mathbb{C}$. The structure of the tree is identical in both cases; only the computation of the leaf values changes. We also observe that the square root of the root value coincides with the Frobenius norm: $\sqrt{T_{0,0}} = \|A\|_F$. When the matrix dimensions are not powers of two, the matrix is zero-padded to the next power of two in each dimension; the zero entries contribute zero amplitude and do not affect the output state. We assume $A$ is not the all-zero matrix. If a subtree has zero total weight, the corresponding rotation angle may be set to $0$ by convention, since no amplitude ever reaches that branch.

\section{Eliminating $U_{2\mathrm{CR}}$ via Classical Precomputation}\label{sec:eliminating_u2cr}
In this section, we describe how to eliminate the $U_{2\mathrm{CR}}$ subroutine from the magnitude-preparation loop of~\cite{berti2025efficient}. The key point is that $U_{2\mathrm{CR}}$ computes a rotation angle from two segment-tree values that are already known after classical preprocessing. Rather than storing the two sibling subtree weights required by $U_{2\mathrm{CR}}$, we therefore construct a derived data structure --- the \emph{angle tree} $\Theta$ --- with one angle per amplitude-splitting step, halving the data width loaded by each magnitude-preparation query. We do \emph{not} try to encode arbitrary real signs into internal tree rotations: the internal nodes of $\Theta$ correspond to splittings between non-negative subtree sums and therefore carry only unsigned magnitude information. Sign or phase information is needed only at the leaf level, where individual matrix entries appear; we handle this via the sign bit of Corollary~\ref{cor:real_signed} for real signed data and via the phase layer of Definition~\ref{def:complex_segment_tree} for complex data. Section~\ref{sec:classical_precompute} introduces the classical precomputation of the rotation angles and the resulting angle tree $\Theta$. Section~\ref{sec:angle_memory_layout} specifies the BBQRAM memory layout for $\Theta$. Eventually, Section~\ref{sec:modified_alg} presents the magnitude-preparation procedure without reversible arithmetic and analyzes its complexity.

\subsection{Classical Precomputation of Rotation Angles}\label{sec:classical_precompute}
In the original algorithm~\cite{berti2025efficient}, at each level of the segment tree, the $U_{2\mathrm{CR}}$ subroutine reads the two sibling subtree weights from the BBQRAM, performs reversible addition, division, square root, and arcsine to compute the corresponding amplitude-splitting angle $\theta_z$, and then applies a cascade of controlled $R_y$ gates. We observe that the \emph{only output} of $U_{2\mathrm{CR}}$ that persists after uncomputation is the rotation applied to the target qubit. Since the segment-tree values are all known classically after preprocessing, the angle $\theta_z$ can be computed \emph{classically} and stored directly in the BBQRAM memory cell. Formally, let $l(z) = \floor{\log_2 z} + 1$ and $d(z) = z - 2^{\floor{\log_2 z}}$ denote the level and position functions from~\cite{berti2025efficient}.

\begin{figure}[t]
\centering
\resizebox{\columnwidth}{!}{%
\begin{tikzpicture}[
    >=stealth,
    every node/.style={align=center},
    angbase/.style={
      draw=black!60, line width=0.4pt,
      rounded corners=2pt,
      minimum width=1.2cm, minimum height=0.6cm,
      font=\scriptsize,
      drop shadow={shadow xshift=0.4pt, shadow yshift=-0.4pt, opacity=0.25}},
    anglevzero/.style={angbase, fill=PastelIBMPurple},
    anglevone/.style={angbase, fill=PastelIBMBlue},
    anglevtwo/.style={angbase, fill=PastelIBMPink},
    signbit/.style={
      draw=black!60, line width=0.4pt,
      rounded corners=2pt,
      minimum width=0.7cm, minimum height=0.6cm,
      fill=PastelIBMOrange,
      font=\scriptsize,
      drop shadow={shadow xshift=0.4pt, shadow yshift=-0.4pt, opacity=0.25}},
    treeedge/.style={draw=black!50, line width=0.5pt, ->}]
    \node[anglevzero] (r)   at ( 0  , 0)  {$\Theta_{0,0}$};
    \node[anglevone]  (a10) at (-2.4,-1)  {$\Theta_{1,0}$};
    \node[anglevone]  (a11) at ( 2.4,-1)  {$\Theta_{1,1}$};
    \node[anglevtwo]  (a20) at (-3.6,-2)  {$\Theta_{2,0}$};
    \node[anglevtwo]  (a21) at (-1.2,-2)  {$\Theta_{2,1}$};
    \node[anglevtwo]  (a22) at ( 1.2,-2)  {$\Theta_{2,2}$};
    \node[anglevtwo]  (a23) at ( 3.6,-2)  {$\Theta_{2,3}$};
    \draw[treeedge] (r)   -- (a10);  \draw[treeedge] (r)   -- (a11);
    \draw[treeedge] (a10) -- (a20);  \draw[treeedge] (a10) -- (a21);
    \draw[treeedge] (a11) -- (a22);  \draw[treeedge] (a11) -- (a23);
    \node[signbit] (s0) at (-4.2,-3.1) {$s_0$};
    \node[signbit] (s1) at (-3.0,-3.1) {$s_1$};
    \node[signbit] (s2) at (-1.8,-3.1) {$s_2$};
    \node[signbit] (s3) at (-0.6,-3.1) {$s_3$};
    \node[signbit] (s4) at ( 0.6,-3.1) {$s_4$};
    \node[signbit] (s5) at ( 1.8,-3.1) {$s_5$};
    \node[signbit] (s6) at ( 3.0,-3.1) {$s_6$};
    \node[signbit] (s7) at ( 4.2,-3.1) {$s_7$};
    \draw[treeedge] (a20) -- (s0);  \draw[treeedge] (a20) -- (s1);
    \draw[treeedge] (a21) -- (s2);  \draw[treeedge] (a21) -- (s3);
    \draw[treeedge] (a22) -- (s4);  \draw[treeedge] (a22) -- (s5);
    \draw[treeedge] (a23) -- (s6);  \draw[treeedge] (a23) -- (s7);
\end{tikzpicture}}
\caption{The angle tree $\Theta$ for $K=8$ entries (Definition~\ref{def:angle_segment_tree}); nodes hold one $t$-bit unsigned rotation angle, depth $\log_2 K - 1 = 2$, node count $K-1 = 7$. The orange leaf row shows the one-bit sign field $\{s_0,\dots,s_7\}$ added for real signed data (Corollary~\ref{cor:real_signed}); the complex case replaces this by the $t$-bit phase layer $\Phi$ of Figure~\ref{fig:gamma_data_structure}.}
\label{fig:angle_tree_theta}
\end{figure}
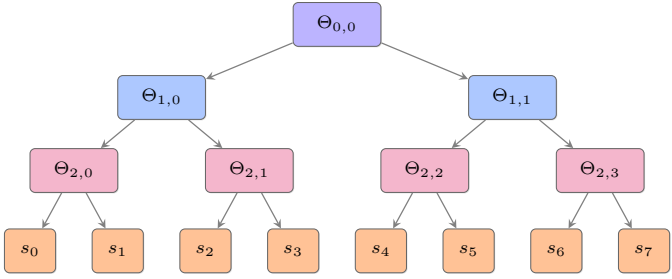
\begin{definition}[Precomputed rotation angle]\label{def:theta_precompute}
For each memory cell index $z \in \{1, \dots, K-1\}$, with $T_L(z) = T_{l(z),2d(z)}$ and $T_R(z) = T_{l(z),2d(z)+1}$, define
\[
\theta_z = 2\arcsinop\!\left(\sqrt{\dfrac{T_R(z)}{T_L(z) + T_R(z)}}\right)
\]
when $T_L(z)+T_R(z)>0$, and set $\theta_z=0$ otherwise.
\end{definition}
The angle $\theta_z$ has the key property that
\begin{equation}\label{eq:ry_theta}
R_y(\theta_z)\ket{0} = \sqrt{\frac{T_L(z)}{T_L(z) {+} T_R(z)}}\,\ket{0} + \sqrt{\frac{T_R(z)}{T_L(z) {+} T_R(z)}}\,\ket{1}.
\end{equation}
Therefore, retrieving the precomputed $\theta_z$ and applying the cascade of $R_y$ gates produces, up to the fixed-point precision $\varepsilon = \mathcal{O}(2^{-t})$, the same amplitude encoding as $U_{2\mathrm{CR}}$.

The collection of precomputed angles forms a derived data structure with one node per amplitude-splitting step, which we make explicit in Definition~\ref{def:angle_segment_tree}.
\begin{definition}[Angle tree]\label{def:angle_segment_tree}
Let $T$ be the segment tree of squared moduli of Definition~\ref{def:segment_tree}. The associated \emph{angle tree} $\Theta$ is a complete binary tree of depth $\log_2 K - 1$ with one node per internal sibling-pair splitting in $T$. For $h \in [0, \log_2 K - 1]$ and $p \in [0, 2^h)$,
\[
\Theta_{h,p} \;=\; \theta_{z}, \qquad z \;=\; 2^h + p,
\]
where $\theta_z$ is the precomputed rotation angle of Definition~\ref{def:theta_precompute} that splits the amplitude reaching $T_{h,p}$ between its two children $T_{h+1,2p}$ and $T_{h+1,2p+1}$. Thus $\Theta$ has $K-1$ nodes, can be built from $T$ in $\mathcal{O}(K\poly(t))$ classical time, and stores one $t$-bit fixed-point angle per node.
\end{definition}

Figure~\ref{fig:angle_tree_theta} shows the structure of $\Theta$ for $K=8$, together with the one-bit sign field $\{s_0,\dots,s_7\}$ attached at the leaves for real signed data (Corollary~\ref{cor:real_signed}). Each node $\Theta_{h,p}$ corresponds to one amplitude-splitting step; traversing $\Theta$ from root to leaves mirrors the $k$ iterations of the magnitude-preparation loop, with the level-$h$ angles loaded into the working register at iteration $h+1$. Compared with the original $T$, the angle tree has one fewer level (no leaves storing $|a_z|^2$), reflecting that the final amplitude split is the only place where individual matrix entries appear; the corresponding sign or phase information is therefore stored in a separate leaf field disjoint from $\Theta$ --- one bit per leaf for real signed data, $t$ bits per leaf for complex data (the phase layer $\Phi$ of Definition~\ref{def:complex_segment_tree}, depicted in Figure~\ref{fig:gamma_data_structure}).

\subsection{Memory Layout for the Angle Tree}\label{sec:angle_memory_layout}
We now specify how the angle tree is mapped into BBQRAM memory cells.

\begin{proposition}[Memory layout for the angle tree]\label{prop:angle_tree_mapping}
Let $\Theta$ be the angle tree associated with a nonzero $A\in\mathbb{C}^{M\times N}$, with $K=MN$ a power of two and $\Theta$ defined as in Definition~\ref{def:angle_segment_tree}. We map $\Theta$ into a BBQRAM with $K$ memory cells $\{\ket{L_z}\}_{z=0}^{K-1}$, where each cell contains a $t$-qubit angle field:
\begin{equation}
\label{eq:angle_tree_mapping}
\ket{L_z}
=
\begin{cases}
\ket{0}^{t}, & \textrm{if } z=0, \\[1ex]
\ket{\Theta_{l(z)-1,\,d(z)}}^{t}, & \textrm{otherwise,}
\end{cases}
\end{equation}
with
\[
l(z)=\lfloor\log_2 z\rfloor+1,
\qquad
d(z)=z-2^{\lfloor\log_2 z\rfloor}.
\]
Here $\ket{L_0}=\ket{0}^t$ is a dummy cell that fixes the uniform width; each cell $z\geq1$ stores exactly one fixed-point amplitude-splitting angle. These angles are unsigned magnitude angles. Real signs and complex phases are supplied by a separate leaf field, as in Corollary~\ref{cor:real_signed} and Definition~\ref{def:complex_segment_tree}.

For real signed data, each cell extends to $t+1$ bits by appending the one-bit sign field $s_z=\mathbf{1}[a_z<0]$:
\begin{equation}
\label{eq:real_signed_mapping}
\ket{L_z^{\mathrm{real}}}\;=\;\ket{L_z}^{t}\,\ket{s_z}^{1},
\qquad z\in\{0,1,\dots,K-1\},
\end{equation}
with $\ket{L_z}$ as in Equation~\ref{eq:angle_tree_mapping} and $s_0=\mathbf{1}[a_0<0]$ (the final sign query reaches the leaf branch $z=0$, so $s_0$ must carry the sign of $a_0$). The complex case replaces the one-bit sign by a $t$-bit phase, giving the $2t$-bit layout of Proposition~\ref{prop:memory_layout_complex}.
\end{proposition}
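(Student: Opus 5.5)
The proposition is essentially a well-definedness and bijection claim, so my plan is to verify that the map $z\mapsto(l(z)-1,d(z))$ is a bijection from $\{1,\dots,K-1\}$ onto the node set of $\Theta$, and that each cell then has the stated contents and width.

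\textbf{Step 1: the indexing map.} For $z\ge1$, write $z=2^{h}+p$ with $h=\lfloor\log_2 z\rfloor$ and $0\le p<2^h$. This decomposition is unique, because $2^h\le z<2^{h+1}$ pins down $h$. Then $l(z)-1=h$ and $d(z)=p$. Since $z\le K-1<2^{k}$, we get $h\in[0,k-1]=[0,\log_2K-1]$, which is exactly the height range of $\Theta$ in Definition~\ref{def:angle_segment_tree}. Conversely, every node $(h,p)$ of $\Theta$ gives $z=2^h+p\in[1,K-1]$. This inverse is precisely the relation $z=2^h+p$ used in Definition~\ref{def:angle_segment_tree}, so $\ket{L_z}=\ket{\Theta_{h,p}}=\ket{\theta_z}$. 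Counting confirms the bijection: $\sum_{h=0}^{k-1}2^h=K-1$ nodes against $K-1$ nonzero cells. The remaining cell $z=0$ is left unused and is padded with $\ket{0}^t$, so all $K$ cells have uniform width $t$.

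\textbf{Step 2: consistency with Definition~\ref{def:theta_precompute}.} I need $\theta_z$ to split $T_{h,p}$ into $T_{h+1,2p}$ and $T_{h+1,2p+1}$. Definition~\ref{def:theta_precompute} writes $T_L(z)=T_{l(z),2d(z)}=T_{h+1,2p}$, and likewise for $T_R$. This matches the parent--child relation of Definition~\ref{def:segment_tree}, so $T_L+T_R=T_{h,p}$. Each angle lies in $[0,\pi]$, because the arcsine argument lies in $[0,1]$; the zero-weight convention sets the angle to $0$. Hence every angle is representable as an unsigned $t$-bit magnitude angle with step $\Delta_\theta$, which justifies the remark that cells carry only unsigned data.

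\textbf{Step 3: the real signed extension.} Appending $s_z=\mathbf{1}[a_z<0]$ to every $z\in\{0,\dots,K-1\}$, including $z=0$, gives $K$ cells of width $t+1$, one sign bit per leaf. The point to justify is that during the final phase-correction query the address register holds the full leaf index $z$ of $a_z$. It does not hold a tree index, so $s_0$ must be the sign of $a_0$ even though $L_0$ is a dummy angle cell. I would cite the tree-traversal convention of~\cite{berti2025efficient}: after $k$ iterations the address register contains $\ket{z}^k$.

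\textbf{Main obstacle.} I expect the only delicate point to be Step 3, namely confirming that the sign field is indexed by leaf position while the angle field is indexed by heap-style tree position, so that the two share cell addresses without collision. Since the angle field and the sign field are read in different queries and are stored in disjoint bit positions, I would argue they coexist in the same cell.
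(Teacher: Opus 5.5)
Your verification is correct and matches the inline justification the paper gives; the paper supplies no separate proof. That justification is that the map $z=2^h+p$ exhausts the $K-1$ nodes of $\Theta$ exactly once, and that $s_0$ must be the sign of $a_0$ because the final query is addressed by leaf index, which you can cite directly from the $h=k$ case of Lemma~\ref{lem:routing_marker} rather than from~\cite{berti2025efficient}. One small correction to your last paragraph: in Algorithm~\ref{alg:magnitude} every query loads both fields. Coexistence therefore rests on the unused field being left untouched and cleared by the reverse query, with the address unchanged by the cascade, not on the two fields being read in different queries.
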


The BBQRAM cells may in principle store either classical or quantum bits; this changes only
the retrieval mechanism, not the logical query used here~\cite{Hann_2019}. In our construction the stored
values are classically precomputed fixed-point strings, so writing the memory is cheap:
one may initialize the cells by classical control, or equivalently by a parallel layer of
Pauli-$X$ gates on the qubits whose stored bits are equal to one.

\begin{algorithm}[t]
\caption{State preparation without $U_{2\mathrm{CR}}$ (real signed data)}\label{alg:magnitude}
\KwIn{$A \in \mathbb{R}^{M \times N}$, $K=MN$, $k=\log_2 K$, precision $t$}
\KwOut{$\frac{1}{\|A\|_F}\sum_z a_z\ket{z}$}
\BlankLine
\tcc{Step 1: Classical preprocessing}
Build segment tree $T$ on $\{|a_z|^2\}_{z=0}^{K-1}$\;
Write dummy $\theta_0=0$ and $s_0=\mathbf{1}[a_0<0]$ into BBQRAM cell $0$\;
\For{$z = 1$ \KwTo $K-1$}{
  $\theta_z \leftarrow 0$ if $T_L(z)+T_R(z)=0$, otherwise $2\arcsin\!\bigl(\sqrt{\frac{T_R(z)}{T_L(z)+T_R(z)}}\bigr)$\;
  $s_z \leftarrow \mathbf{1}[a_z<0]$\;
  Write $\ket{\theta_z}$ into the angle field and $\ket{s_z}$ into the sign field of BBQRAM cell $z$\;
}
\tcc{Step 2: Magnitude preparation}
Initialize $\ket{0}^t_{\mathrm{w}_\theta}\ket{0}_{\mathrm{w}_s}\ket{0}_{\mathrm{v}}\ket{0\cdots01}_{\mathrm{a}}$\;
\For{$h = 1$ \KwTo $k$}{
  Retrieve $\ket{\theta_z}\ket{s_z}$ into $\mathrm{w}_\theta,\mathrm{w}_s$ \tcp*{$\mathcal{O}(\log_2 K)$}
  Apply cascade $\mathrm{C}_{\mathrm{w}_\theta}R_{y\mapsto\mathrm{v}}(\theta_z)$ \tcp*{$\mathcal{O}(t)$}
  Uncompute $\mathrm{w}_\theta,\mathrm{w}_s$ \tcp*{$\mathcal{O}(\log_2 K)$}
  Left circular shift on $\mathrm{v},\mathrm{a}$\;
}
\tcc{Step 3: Sign encoding}
Retrieve $\ket{\theta_z}\ket{s_z}$ into $\mathrm{w}_\theta,\mathrm{w}_s$ \tcp*{$\mathcal{O}(\log_2 K)$}
Apply $\mathrm{C}_{\mathrm{w}_s}Z_{\mathrm{v}}$ \tcp*{$(-1)^{s_z}$ on the $\ket{1}_{\mathrm{v}}$ branch}
Uncompute $\mathrm{w}_\theta,\mathrm{w}_s$ \tcp*{$\mathcal{O}(\log_2 K)$}
\Return state on register $\mathrm{a}$\;
\end{algorithm}
\subsection{Magnitude Preparation without Reversible Arithmetic}\label{sec:modified_alg}
Assume that the angle tree is stored according to Proposition~\ref{prop:angle_tree_mapping}. Each BBQRAM query in Algorithm~\ref{alg:magnitude} loads the full $(t{+}1)$-bit cell into $(\mathrm{w}_\theta,\mathrm{w}_s)$; only the angle field is used in the magnitude rotation, while the sign field is uncomputed by the reverse query during Step~2 and consumed by the controlled-$Z$ in Step~3. This single field replaces the two $t$-bit sibling values loaded by the original implementation, and it exhausts the nodes of $\Theta$ exactly once as $z$ ranges over $\{1,\dots,K-1\}$. Algorithm~\ref{alg:magnitude} provides the pseudocode for the complete real signed procedure, including the final controlled-$Z$ sign-encoding step (Corollary~\ref{cor:real_signed}).

\begin{definition}[Circular-shift convention]\label{def:circular_shift}
Let the address register be ordered as $\mathrm{a}=a_{k-1}\cdots a_0$, with $a_{k-1}$ the most significant routing bit. The left circular shift on the ordered wires $(\mathrm{v},a_{k-1},\ldots,a_0)$ is the unitary
\[
\mathsf{Shift}\ket{b}_{\mathrm{v}}\ket{a_{k-1}\cdots a_0}_{\mathrm{a}}
=\ket{a_{k-1}}_{\mathrm{v}}\ket{a_{k-2}\cdots a_0 b}_{\mathrm{a}} .
\]
\end{definition}

\begin{lemma}[Routing-marker invariant]\label{lem:routing_marker}
Let $\mathrm{bin}_h(p)$ denote the $h$-bit binary representation of $p$. After the $h$-th magnitude-preparation iteration, including the shift of Definition~\ref{def:circular_shift}, the clean work-register state is
\[
\frac{1}{\|A\|_F}\sum_{p=0}^{2^h-1}\sqrt{T_{h,p}}\,
\ket{0}^{t}_{\mathrm{w}}\ket{0}_{\mathrm{v}}
\ket{0^{k-h-1}1\,\mathrm{bin}_h(p)}_{\mathrm{a}}
\]
for $1\leq h<k$, and
\[
\frac{1}{\|A\|_F}\sum_{p=0}^{K-1}\sqrt{T_{k,p}}\,
\ket{0}^{t}_{\mathrm{w}}\ket{1}_{\mathrm{v}}\ket{\mathrm{bin}_k(p)}_{\mathrm{a}}
\]
for $h=k$.
\end{lemma}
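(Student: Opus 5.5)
We argue by induction on $h$, tracking one iteration of the loop of Algorithm~\ref{alg:magnitude}: a BBQRAM retrieval (Definition~\ref{def:bbqram_query}) into the work register, the $R_y$ cascade (Lemma~\ref{lem:cascade}), the uncomputing query, and the shift of Definition~\ref{def:circular_shift}. The base case is the initial state $\ket{0}^t_{\mathrm{w}}\ket{0}_{\mathrm{v}}\ket{0^{k-1}1}_{\mathrm{a}}$. It matches the $h=0$ form of the invariant, since $T_{0,0}/\|A\|_F^2=1$ and $\mathrm{bin}_0$ is empty, so this is the natural starting point. The sign bit $\mathrm{w}_s$ is loaded and unloaded together with $\mathrm{w}_\theta$ and never acted on in Step~2, so we fold it into $\mathrm{w}$ and treat it as inert.

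For the inductive step, assume the state after iteration $h$ (with $0\le h<k$) is
\[
\frac{1}{\|A\|_F}\sum_{p=0}^{2^h-1}\sqrt{T_{h,p}}\,\ket{0}_{\mathrm{w}}\ket{0}_{\mathrm{v}}\ket{0^{k-h-1}1\,\mathrm{bin}_h(p)}_{\mathrm{a}} .
\]
The key observation is that the address string $0^{k-h-1}1\,\mathrm{bin}_h(p)$ is exactly the integer $z=2^h+p$. Hence $l(z)=h+1$ and $d(z)=p$, and by Proposition~\ref{prop:angle_tree_mapping} the query loads $\Theta_{h,p}=\theta_{2^h+p}$. Because $z\ge1$ here, the dummy cell $L_0$ is never addressed in Step~2.

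Next we apply the cascade. By Lemma~\ref{lem:cascade} and Equation~\eqref{eq:ry_theta}, the qubit $\mathrm{v}$ becomes
\[
\sqrt{\tfrac{T_L(z)}{T_L(z)+T_R(z)}}\ket{0}+\sqrt{\tfrac{T_R(z)}{T_L(z)+T_R(z)}}\ket{1}.
\]
We identify $T_L(z)=T_{h+1,2p}$ and $T_R(z)=T_{h+1,2p+1}$, and use the parent--child relation $T_{h,p}=T_{h+1,2p}+T_{h+1,2p+1}$ from Definition~\ref{def:segment_tree}. The normalizations then cancel the factor $\sqrt{T_{h,p}}$, giving amplitudes $\sqrt{T_{h+1,2p+b}}$ on $\ket{b}_{\mathrm{v}}$. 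Branches with $T_{h,p}=0$ carry zero amplitude, so the convention $\theta_z=0$ is harmless there. The reverse query is controlled on the unchanged address $z$, so it XORs the same $\theta_z$ and returns $\mathrm{w}$ to $\ket{0}$. The work register therefore ends clean and disentangled.

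It remains to check the shift, which is where we expect the only real care to be needed: the indexing at the endpoints. The shift sends $\ket{b}_{\mathrm{v}}\ket{0^{k-h-1}1\,\mathrm{bin}_h(p)}_{\mathrm{a}}$ to $\ket{0}_{\mathrm{v}}\ket{0^{k-h-2}1\,\mathrm{bin}_h(p)\,b}$ whenever $k-h-1\ge1$, popping a leading $0$ into $\mathrm{v}$. Since $\mathrm{bin}_h(p)\,b=\mathrm{bin}_{h+1}(2p+b)$, reindexing by $p'=2p+b$ yields the invariant at $h+1<k$. In the final iteration $h=k-1$ the leading bit of the address is the marker $1$ itself. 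The shift then moves this $1$ into $\mathrm{v}$ and leaves $\mathrm{bin}_k(p')$ in $\mathrm{a}$, which is the stated $h=k$ form. Finally, because the leaves satisfy $T_{k,p}=|a_p|^2$, these amplitudes are $|a_p|/\|A\|_F$ up to the fixed-point error of the cascade.
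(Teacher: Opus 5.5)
Your induction is correct. It is also more self-contained than what the paper gives: the paper has no standalone proof of Lemma~\ref{lem:routing_marker}. Instead, in the proof of Theorem~\ref{thm:magnitude_prep}, it inherits the traversal, the shift schedule and the invariant from \cite{berti2025efficient} by a substitution argument. Retrieving $\Theta_{l(z)-1,d(z)}$ and applying the $R_y$ cascade produces exactly the rotation $U_{2\mathrm{CR}}$ would have computed from the two child weights, with the same zero-subtree convention. Since only this local block changes, the old invariant carries over.

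That route is short and makes clear that nothing about the routing changed. However, it relies on the prior work's indexing and shift conventions matching the ones fixed here (Definition~\ref{def:circular_shift}, Proposition~\ref{prop:angle_tree_mapping}). Your direct verification checks exactly the points the substitution argument leaves implicit:
\begin{itemize}
\item the address string $0^{k-h-1}1\,\mathrm{bin}_h(p)$ is the integer $2^h+p$, so $l(z)=h+1$ and $d(z)=p$ select precisely $\Theta_{h,p}$ from the new layout;
\item the dummy cell $L_0$ is never addressed inside the loop;
\item the parent--child relation cancels the $\sqrt{T_{h,p}}$ factor, harmlessly when $T_{h,p}=0$;
\item the reverse query cleans the work register because the cascade leaves the address untouched;
\item at $h=k-1$ it is the marker bit itself that exits into $\mathrm{v}$.
\end{itemize}
Starting from an $h=0$ base case is a clean way to avoid treating the first iteration separately.

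As you note, the lemma is exact only for exact angles; with $t$-bit storage the invariant holds up to the error quantified in Equation~\eqref{eq:prec_corr}.
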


\begin{theorem}[Magnitude preparation without $U_{2\mathrm{CR}}$]\label{thm:magnitude_prep}
Given nonzero $A \in \mathbb{C}^{M \times N}$ with $M=2^m$, $N=2^n$, and $K=MN$, the precomputed-angle BBQRAM layout prepares
\[
\ket{|A|}=\frac{1}{\|A\|_F}\sum_{z=0}^{K-1}|a_z|\ket{z}^k
\]
in BBQRAM query time $\mathcal{O}(\log_2^2 K)$ using $k+t+1$ QPU qubits, $K$ memory cells with a $t$-bit angle field, and no reversible arithmetic on the QPU.
\end{theorem}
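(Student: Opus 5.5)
The plan is to prove correctness from the routing-marker invariant (Lemma~\ref{lem:routing_marker}) and then count resources. For $h=k$ the invariant gives
\[
\frac{1}{\|A\|_F}\sum_{p}\sqrt{T_{k,p}}\,\ket{0}_{\mathrm{w}}\ket{1}_{\mathrm{v}}\ket{\mathrm{bin}_k(p)}_{\mathrm{a}} .
\]
Since $\sqrt{T_{k,p}}=|a_p|$ by Definition~\ref{def:segment_tree}, the address register holds exactly $\ket{|A|}$, with $\mathrm{v}$ and $\mathrm{w}$ factored out in fixed basis states. So the substance of the proof is establishing the invariant, by induction on $h$, and then bounding time, qubits, memory and arithmetic.

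\textbf{Base case.} Before the first iteration the address register is $\ket{0\cdots01}$, which addresses cell $z=1$. That cell stores $\Theta_{0,0}=\theta_1$. By Lemma~\ref{lem:cascade} and Equation~\eqref{eq:ry_theta}, the query, cascade and uncompute sequence leaves $\sqrt{T_{1,0}/T_{0,0}}\ket{0}_{\mathrm{v}}+\sqrt{T_{1,1}/T_{0,0}}\ket{1}_{\mathrm{v}}$. The shift of Definition~\ref{def:circular_shift} then moves the marker bit up one position and appends $\mathrm{v}$ as the new low address bit. This produces $\ket{0}_{\mathrm{v}}\ket{0^{k-2}1\,\mathrm{bin}_1(p)}$, as claimed.

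\textbf{Inductive step.} Assume the invariant holds at level $h$. In the branch $p$, the address register reads $0^{k-h-1}1\,\mathrm{bin}_h(p)$, which is the integer $z=2^h+p$. Then $l(z)=h+1$ and $d(z)=p$, so by Proposition~\ref{prop:angle_tree_mapping} the cell holds $\Theta_{h,p}$. This angle splits $T_{h,p}$ into $T_{h+1,2p}$ and $T_{h+1,2p+1}$. The query acts coherently, branch by branch, and the reverse query restores $\mathrm{w}$ to $\ket{0}$. The reason is that the cascade only touches $\mathrm{v}$, and the address is unchanged between the two queries, so $\mathrm{w}$ is disentangled.

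Using $\sqrt{T_{h,p}}\sqrt{T_{h+1,2p+b}/T_{h,p}}=\sqrt{T_{h+1,2p+b}}$ then gives the level-$(h+1)$ amplitudes. The zero-weight convention $\theta_z=0$ is harmless there, because such branches carry amplitude $0$. The shift appends $b$, giving $\mathrm{bin}_{h+1}(2p+b)$ and moving the marker left. At $h=k$ the marker is shifted out into $\mathrm{v}$, which explains the $\ket{1}_{\mathrm{v}}$ in the final form.

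The main obstacle is this index bookkeeping: checking that the shifted register really equals $2^h+p$, consistently with the definitions of $l$ and $d$ and with the dummy cell $z=0$. The cell $z=0$ is never addressed during Step~2, since the marker keeps $z\ge1$. I would verify this directly from the shift definition.

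\textbf{Resources.} There are $k$ iterations. Each uses two queries costing $\mathcal{O}(\log_2 K)$ each by Lemma~\ref{lem:bbqram_cost}, plus an $\mathcal{O}(t)$ cascade and an $\mathcal{O}(k)$ swap network for the shift. For fixed $t$ this totals $\mathcal{O}(\log_2^2 K)$ query time. The qubits are $k$ address, $t$ work and $1$ for $\mathrm{v}$, i.e.\ $k+t+1$. The memory is the $K$ cells of Proposition~\ref{prop:angle_tree_mapping}.

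The QPU operations are only queries, controlled $R_y$ gates and SWAPs, so no reversible arithmetic is used. Fixed-point rounding of $\theta_z$ contributes $\mathcal{O}(2^{-t})$ per rotation, giving $\mathcal{O}(k2^{-t})$ total error. I would note this as an "up to precision" qualifier, consistent with the paper's convention.
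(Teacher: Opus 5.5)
Your proof is correct and follows the paper's route: the final state comes from the routing-marker invariant at $h=k$, and the resource count is the same ($2k$ queries of cost $\mathcal{O}(\log_2 K)$, $k+t+1$ qubits, $K$ angle cells, and only controlled $R_y$ gates and shifts on the QPU). The one difference is that you prove the invariant directly by induction, using that $z=2^h+p$ gives $l(z)=h+1$ and $d(z)=p$, whereas the paper takes the invariant from~\cite{berti2025efficient} and only argues that the retrieved angle applies the same rotation $U_{2\mathrm{CR}}$ would have computed, so your version is just more self-contained.
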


\begin{proof}
The level-by-level traversal, shift schedule, and routing-marker invariant are those of~\cite{berti2025efficient}. The only change is local: the $U_{2\mathrm{CR}}$ block is replaced by a BBQRAM retrieval of the precomputed angle $\Theta_{l(z)-1,d(z)}$ followed by the controlled-$R_y$ cascade of Equation~\ref{eq:ry_theta}. This applies exactly the rotation that $U_{2\mathrm{CR}}$ would have computed from the two child subtree weights, with the same zero-subtree convention. Lemma~\ref{lem:routing_marker} therefore yields, after $k$ iterations, the clean marker state $\ket{1}_{\mathrm{v}}$ and amplitudes $|a_z|/\|A\|_F$ on the leaf address register.

Each iteration uses one retrieval, one reverse retrieval, a $t$-gate rotation cascade, and the same shift network as~\cite{berti2025efficient}. By Lemma~\ref{lem:bbqram_cost}, the $2k$ BBQRAM queries give $\mathcal{O}(\log_2^2 K)$ query time; the QPU registers are the $k$-qubit address, one $t$-qubit angle register, and the one-qubit target. Since the angles are stored in memory, no reversible adder, divider, square-root, or arcsine subroutine is invoked on the QPU.
\end{proof}
\begin{table}[t]
\centering
\caption{Per-iteration resource comparison: original algorithm of~\cite{berti2025efficient} vs.\ this work.}
\label{tab:comparison}
\scriptsize
\resizebox{\columnwidth}{!}{%
\begin{tabular}{lcc}
\toprule
\textbf{Resource} & \textbf{Original~\cite{berti2025efficient}} & \textbf{This work} \\
\midrule
Fields used in amplitude loop & $2t$ values & $t$ angles \\
Memory Cell width for $\mathcal{R}$ & $1 + 2t$ & $t+1$ \\
Memory Cell width for $\mathcal{C}$ & not supported & $2t$ \\
QPU qubits, magnitude loop & $k+2t+2$ + arith. & $k+t+1$ \\
Gates per iteration & $\tilde{\mathcal{O}}(t) + \mathcal{O}(t)$ & $\mathcal{O}(t)$ ($R_y$ cascade) \\
Reversible arithmetic & add, div, sqrt, arcsin & none \\
Time per iteration & $\mathcal{O}(\log_2 K) + \tilde{\mathcal{O}}(t)$ & $\mathcal{O}(\log_2 K) + \mathcal{O}(t)$ \\
\bottomrule
\end{tabular}
}
\end{table}
\begin{corollary}[Real signed data]\label{cor:real_signed}
For $A\in\mathbb{R}^{M\times N}$, store one additional leaf sign bit $s_z=\mathbf{1}[a_z<0]$ in each memory cell, as illustrated by the orange leaf row of Figure~\ref{fig:angle_tree_theta}. After Theorem~\ref{thm:magnitude_prep}, a BBQRAM query loads $s_z$ in superposition, a controlled $Z$ on the phase target applies $(-1)^{s_z}$, and the reverse query uncomputes the sign register. This prepares the signed real amplitude encoding in $\mathcal{O}(\log_2^2K)$ query time with cell width $t+1$ bits, $2k+2$ BBQRAM queries, and $k+t+2$ QPU qubits.
\end{corollary}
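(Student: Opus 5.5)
The plan is to start from the output of Theorem~\ref{thm:magnitude_prep}, as spelled out by the $h=k$ case of Lemma~\ref{lem:routing_marker}. After the $k$ magnitude iterations the joint state is
\[
\frac{1}{\|A\|_F}\sum_{p=0}^{K-1}|a_p|\,\ket{0}^t_{\mathrm{w}_\theta}\ket{0}_{\mathrm{w}_s}\ket{1}_{\mathrm{v}}\ket{\mathrm{bin}_k(p)}_{\mathrm{a}} .
\]
Here $\mathrm{v}$ is in the definite state $\ket{1}$ on every branch. The address register now holds exactly the leaf index $z=p$, so a BBQRAM query under the layout of Proposition~\ref{prop:angle_tree_mapping}, Equation~\ref{eq:real_signed_mapping}, loads $\ket{L_p}\ket{s_p}$ into $(\mathrm{w}_\theta,\mathrm{w}_s)$.

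I need to check one indexing subtlety: the leaf index $p$ ranges over all of $\{0,\dots,K-1\}$, including $p=0$. This index also labels the angle cells used in the magnitude loop, so every cell must hold the sign of $a_z$ for the leaf with address $z$. That is why $s_0$ is defined as $\mathbf{1}[a_0<0]$ rather than left as a dummy. The angle field of the loaded cell is irrelevant here and simply rides along.

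Next I apply $\mathrm{C}_{\mathrm{w}_s}Z_{\mathrm{v}}$. Since $\mathrm{v}=\ket{1}$ on every branch, this multiplies branch $p$ by $(-1)^{s_p}$. Using $|a_p|(-1)^{s_p}=a_p$ for real $a_p$ (zero entries are harmless under either sign), the amplitudes become $a_p/\|A\|_F$. The reverse query XORs the same cell contents again and restores $\ket{0}^{t}\ket{0}$ on the work registers, because the query is its own inverse and the address is untouched. This leaves the product state $\ket{A}_{\mathrm{a}}\ket{1}_{\mathrm{v}}$ with clean ancillas.

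The resource count adds to Theorem~\ref{thm:magnitude_prep}:
\begin{itemize}
\item Queries: two more BBQRAM queries, giving $2k+2$ in total, each costing $\mathcal{O}(\log_2 K)$ by Lemma~\ref{lem:bbqram_cost}. The total is therefore still $\mathcal{O}(\log_2^2K)$, since the extra cell bit does not change the routing depth.
\item Qubits: one sign qubit $\mathrm{w}_s$ beyond the $k+t+1$ qubits of Theorem~\ref{thm:magnitude_prep}, giving $k+t+2$.
\item Cell width: $t+1$ bits.
\end{itemize}

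The main obstacle is justifying the sign field during the magnitude loop. In each iteration the retrieval also loads $s_z$ for an internal-node address $z$. It is never used there, and the reverse query clears it, so it does not disturb the invariant of Lemma~\ref{lem:routing_marker}. I would state this explicitly, noting that the cascade acts only on $\mathrm{w}_\theta$ and $\mathrm{v}$ and commutes with the control value stored in $\mathrm{w}_s$.
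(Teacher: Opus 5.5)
Your proposal is correct and follows the paper's own route. That route is Step~3 of Algorithm~\ref{alg:magnitude}, plus the remark that the sign field is loaded and uncomputed harmlessly during the magnitude loop, plus the leaf-address/$\mathrm{v}=\ket{1}$ argument from the proof of Theorem~\ref{thm:state_prep_complex}, specialized to a one-bit phase. Your handling of the $s_0$ indexing and your resource tally ($2k+2$ queries, $k+t+2$ QPU qubits, $(t+1)$-bit cells) match the paper's accounting.
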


Table~\ref{tab:comparison} compares the amplitude-preparation loop with the previous result~\cite{berti2025efficient}. The primary improvement is the removal of all reversible arithmetic circuits; the angle field also reduces the data loaded by each amplitude-preparation query from two $t$-bit subtree weights to one $t$-bit angle.

The original algorithm invokes a reversible adder~\cite{draper2006logarithmic}, divider, square root, and arcsine~\cite{Sanders_2019_arcsin} in every level. The proposed layout moves these computations to classical preprocessing. The remaining QPU cost is BBQRAM access plus controlled rotations. We count this as query time rather than a complete physical resource estimate.

\section{Extension to Complex-Valued Matrices}\label{sec:complex_extension}
In this section, we extend the state preparation algorithm to complex-valued matrices $A \in \mathbb{C}^{M \times N}$. Section~\ref{sec:complex_memory} describes the memory layout for complex data, which stores the phase angle alongside the precomputed rotation angle in each BBQRAM cell. Section~\ref{sec:two_step_alg} presents the resulting two-step algorithm and analyzes its complexity. The key idea is to reuse the amplitude preparation loop of Section~\ref{sec:eliminating_u2cr} unchanged, and to add Step~2, which encodes the complex phases.

Each entry $a_z \in \mathbb{C}$ is decomposed in polar form as
\begin{equation}\label{eq:polar}
a_z = |a_z|\,e^{i\varphi_z},
\end{equation}
where $|a_z| = \sqrt{\Re(a_z)^2 + \Im(a_z)^2}$ and the phase is
\begin{equation}\label{eq:atan2}
\varphi_z = \atantwo\bigl(\Im(a_z),\;\Re(a_z)\bigr) \;\in\; (-\pi, \pi].
\end{equation}
For $a_z=0$, we set $\varphi_z=0$ by convention; the value is irrelevant because the amplitude of that basis state is zero. The memory stores $\bar\varphi_z=\varphi_z\bmod 2\pi\in[0,2\pi)$ using the phase convention of Section~\ref{sec:cascade}. The segment tree of squared moduli is still built only on $|a_z|^2$, and the corresponding angle tree $\Theta$ of Definition~\ref{def:angle_segment_tree} contains only unsigned magnitude angles; signs and complex phases live exclusively in the leaf-level phase data. The real-valued case is naturally subsumed: for $a_z \geq 0$, $\varphi_z=0$, while for $a_z<0$, $\varphi_z=\pi$. We make the resulting two-tier data structure explicit in Definition~\ref{def:complex_segment_tree}.

\begin{definition}[Complex angle tree]\label{def:complex_segment_tree}
The \emph{complex angle tree} associated with $A \in \mathbb{C}^{M \times N}$ is the pair $\Gamma = (\Theta, \Phi)$, where $\Theta$ is the angle tree of Definition~\ref{def:angle_segment_tree} (built from $\{|a_z|^2\}$), and $\Phi = (\bar\varphi_0, \dots, \bar\varphi_{K-1})$ is a phase layer of $K$ nodes attached one-per-entry below the leaves of $\Theta$. The angle layer $\Theta$ drives the magnitude preparation, while the phase layer $\Phi$ is consumed by a single coherent retrieval at the end of the procedure.
\end{definition}

\subsection{Memory Layout for Complex Data}\label{sec:complex_memory}
\begin{figure}[t]
\centering
\resizebox{\columnwidth}{!}{%
\begin{tikzpicture}[
    >=stealth,
    every node/.style={align=center},
    angbase/.style={
      draw=black!60, line width=0.4pt,
      rounded corners=2pt,
      minimum width=1.2cm, minimum height=0.6cm,
      font=\scriptsize,
      drop shadow={shadow xshift=0.4pt, shadow yshift=-0.4pt, opacity=0.25}},
    anglevzero/.style={angbase, fill=PastelIBMPurple},
    anglevone/.style={angbase, fill=PastelIBMBlue},
    anglevtwo/.style={angbase, fill=PastelIBMPink},
    phase/.style={
      draw=black!60, line width=0.4pt,
      rounded corners=2pt,
      minimum width=0.9cm, minimum height=0.6cm,
      fill=PastelIBMOrange,
      font=\scriptsize,
      drop shadow={shadow xshift=0.4pt, shadow yshift=-0.4pt, opacity=0.25}},
    treeedge/.style={draw=black!50, line width=0.5pt, ->}]
    \node[anglevzero] (r)   at ( 0  , 0)   {$\Theta_{0,0}$};
    \node[anglevone]  (a10) at (-2.4,-1)   {$\Theta_{1,0}$};
    \node[anglevone]  (a11) at ( 2.4,-1)   {$\Theta_{1,1}$};
    \node[anglevtwo]  (a20) at (-3.6,-2)   {$\Theta_{2,0}$};
    \node[anglevtwo]  (a21) at (-1.2,-2)   {$\Theta_{2,1}$};
    \node[anglevtwo]  (a22) at ( 1.2,-2)   {$\Theta_{2,2}$};
    \node[anglevtwo]  (a23) at ( 3.6,-2)   {$\Theta_{2,3}$};
    \draw[treeedge] (r)   -- (a10);  \draw[treeedge] (r)   -- (a11);
    \draw[treeedge] (a10) -- (a20);  \draw[treeedge] (a10) -- (a21);
    \draw[treeedge] (a11) -- (a22);  \draw[treeedge] (a11) -- (a23);
    \node[phase] (p0) at (-4.2,-3.1) {$\bar\varphi_0$};
    \node[phase] (p1) at (-3.0,-3.1) {$\bar\varphi_1$};
    \node[phase] (p2) at (-1.8,-3.1) {$\bar\varphi_2$};
    \node[phase] (p3) at (-0.6,-3.1) {$\bar\varphi_3$};
    \node[phase] (p4) at ( 0.6,-3.1) {$\bar\varphi_4$};
    \node[phase] (p5) at ( 1.8,-3.1) {$\bar\varphi_5$};
    \node[phase] (p6) at ( 3.0,-3.1) {$\bar\varphi_6$};
    \node[phase] (p7) at ( 4.2,-3.1) {$\bar\varphi_7$};
    \draw[treeedge] (a20) -- (p0);  \draw[treeedge] (a20) -- (p1);
    \draw[treeedge] (a21) -- (p2);  \draw[treeedge] (a21) -- (p3);
    \draw[treeedge] (a22) -- (p4);  \draw[treeedge] (a22) -- (p5);
    \draw[treeedge] (a23) -- (p6);  \draw[treeedge] (a23) -- (p7);
\end{tikzpicture}}
\caption{The complex angle tree $\Gamma=(\Theta,\Phi)$ for $K=8$ entries (Definition~\ref{def:complex_segment_tree}). The angle tree $\Theta$ has depth $\log_2 K - 1$ and stores one unsigned amplitude-splitting angle per internal node of the squared-moduli tree $T$; it is consumed level by level during Step~1. The phase layer $\Phi$ holds one entry-wise phase $\bar\varphi_z$ per leaf of $T$ and is consumed by a single coherent retrieval in Step~2.}
\label{fig:gamma_data_structure}
\end{figure}
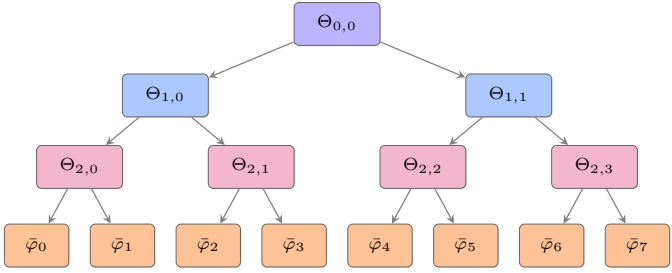
\begin{proposition}[Memory layout for complex matrices]\label{prop:memory_layout_complex}
Let $\Gamma = (\Theta, \Phi)$ be the complex angle tree of Definition~\ref{def:complex_segment_tree}. We map $\Gamma$ into a BBQRAM with $K$ cells $\{\ket{L_z}\}_{z=0}^{K-1}$, each containing $2t$ classical bits loaded coherently into a $2t$-qubit data register:
\begin{equation}
\label{eq:complex_memory_layout}
\ket{L_z}
=
\begin{cases}
\ket{0}^{t}\ket{\bar\varphi_0}^{t}, & \textrm{if } z=0, \\[1ex]
\ket{\Theta_{l(z)-1,d(z)}}^{t}\ket{\bar\varphi_z}^{t}, & \textrm{otherwise,}
\end{cases}
\end{equation}
where $l(z)=\lfloor\log_2 z\rfloor+1$ and $d(z)=z-2^{\lfloor\log_2 z\rfloor}$. Thus each memory cell stores one $t$-bit amplitude angle and one $t$-bit phase. The angle field of $L_0$ is dummy (i.e., set to zero), while its phase field stores the physical phase of entry $a_0$; Step~1 interprets $z\geq1$ as an internal-node index of $\Theta$, and Step~2 interprets every $z$ as a leaf/data index of $\Phi$.

\end{proposition}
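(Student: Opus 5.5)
The proposition is largely definitional, so the proof amounts to checking that the layout is well defined and consistent with how Steps~1 and~2 will address the memory. I would establish four things: the index map $z\mapsto(l(z)-1,d(z))$ is a bijection from $\{1,\dots,K-1\}$ onto the nodes of $\Theta$; every cell has uniform width $2t$; the phase field of every cell $z\in\{0,\dots,K-1\}$ carries exactly $\bar\varphi_z$; and the two address interpretations never conflict. After that, the well-definedness of the BBQRAM query of Definition~\ref{def:bbqram_query} on these cells follows immediately.

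\textbf{Bijection.} I would reuse the argument behind Proposition~\ref{prop:angle_tree_mapping}. For $z\ge1$, write $h=\lfloor\log_2 z\rfloor\in[0,k-1]$ and $p=d(z)=z-2^h\in[0,2^h)$. This is exactly the inverse of the heap indexing $z=2^h+p$ in Definition~\ref{def:angle_segment_tree}. Since $l(z)-1=h$, the angle field of cell $z$ is $\Theta_{h,p}=\theta_z$. Counting gives $\sum_{h=0}^{k-1}2^h=K-1$, so every node of $\Theta$ is stored exactly once, and cell $0$ is left over as the dummy. The phase field is indexed directly by $z$, so $\Phi$ is placed by the identity map onto all $K$ cells, including $z=0$. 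This gives the $2t$-bit cells.

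\textbf{Compatibility and fixed-point encoding.} Each cell is the concatenation of a $t$-bit angle string and a $t$-bit phase string. A single query therefore returns $\ket{\Theta_{l(z)-1,d(z)}}\ket{\bar\varphi_z}$, or $\ket{0}\ket{\bar\varphi_0}$ when $z=0$. For the fixed-point encodings, $\theta_z\in[0,\pi]\subset[0,4)$ is representable in the magnitude convention of Section~\ref{sec:cascade}, and $\bar\varphi_z\in[0,2\pi)$ is representable in the phase convention. The zero-entry conventions, $\theta_z=0$ for zero-weight subtrees and $\varphi_z=0$ for $a_z=0$, make every field well defined.

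\textbf{Main obstacle.} The delicate step is justifying that one cell can serve both roles, that is, that no address $z$ is ever read as an internal node when its phase is meant, or the reverse.

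In Step~1, iteration $h$ accesses address $2^{h-1}+p$ with $p<2^{h-1}$, by the routing-marker invariant of Lemma~\ref{lem:routing_marker}. Only the angle field is used there, and the reverse query restores the data register, so the phase field is harmless even though it is also loaded. The dummy angle at $z=0$ is never used in Step~1, because the marker bit guarantees $z\ge1$.

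In Step~2, Lemma~\ref{lem:routing_marker} at $h=k$ gives the address $\mathrm{bin}_k(p)$ with $p$ ranging over all of $\{0,\dots,K-1\}$. The phase field is used there, and the angle field is loaded and then uncomputed. For this reason $z=0$ must carry $\bar\varphi_0$ rather than a dummy phase, mirroring the role of $s_0$ in Proposition~\ref{prop:angle_tree_mapping}.

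I would conclude by noting that this reduces to the real layout under the one-bit specialization $\bar\varphi_z\in\{0,\pi\}$. The total memory is $K$ cells of $2t$ bits, i.e.\ $\mathcal{O}(MN)$ cells, and the construction is classical in $\mathcal{O}(K\poly(t))$ time by Definition~\ref{def:angle_segment_tree} plus one $\atantwo$ evaluation per entry.
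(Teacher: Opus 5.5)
Your proposal is correct and uses the same reasoning the paper relies on. The paper gives no separate proof: it treats the layout as definitional and justifies it through the heap indexing $z=2^h+p$ of Definition~\ref{def:angle_segment_tree}, the remark in Proposition~\ref{prop:angle_tree_mapping} that the final leaf query reaches $z=0$ (so cell $0$ must carry the data of $a_0$), and the routing-marker invariant of Lemma~\ref{lem:routing_marker}; your bijection count and your check that Step~1 only addresses $z\geq 1$ while Step~2 addresses every $z\in\{0,\dots,K-1\}$ make those implicit points explicit.
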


We store the unsigned angles of $\Theta$ because all sign and phase information is supplied by $\Phi$. For real-valued matrices, the phase field collapses to a single sign bit, giving the $t+1$ bit layout of Corollary~\ref{cor:real_signed}. Figure~\ref{fig:gamma_data_structure} illustrates the data structure $\Gamma$, separating the angle tree $\Theta$ from the phase layer $\Phi$ attached at its leaves. Figure~\ref{fig:complex_memory_layout} then shows how $\Gamma$ is laid out in the BBQRAM cells.

\begin{figure*}[t]
  \centering
  \resizebox{\textwidth}{!}{%
    \begin{tikzpicture}[
        every label/.append style={font=\large},
        node/.style={
          draw, fill=white, circle, inner sep=0pt, font=\large,
          drop shadow, text width=2.5em, align=center
        },
        leaf/.style={
          shape=rectangle split, rectangle split parts=2,
          rectangle split draw splits, rectangle split horizontal,
          rounded corners, draw, inner sep=3pt,
          minimum width=4em, minimum height=2.5em,
          font=\normalsize, text width=3em, align=center
        },
        leafdummy/.style={
          rectangle split part fill={neutral-left,PastelIBMOrange}
        },
        leaflevzero/.style={
          rectangle split part fill={PastelIBMPurple,PastelIBMOrange}
        },
        leaflevone/.style={
          rectangle split part fill={PastelIBMBlue,PastelIBMOrange}
        },
        leaflevtwo/.style={
          rectangle split part fill={PastelIBMPink,PastelIBMOrange}
        },
      ]
      \def\dx{6.5} \def\dy{1.2} \def\dxii{3.25} \def\leafsep{1.6}

      \node[node] (root) at (0,0) {$\ket{\bullet}$};
      \coordinate (above) at ($(root)+(0,\dy/1.5)$);
      \draw (above) -- (root);
      \node[node] (L)  at ($(root)+(-\dx,-\dy)$) {$\ket{\bullet}$};
      \node[node] (R)  at ($(root)+(\dx,-\dy)$)  {$\ket{\bullet}$};
      \node[node] (LL) at ($(L)+(-\dxii,-\dy)$) {$\ket{\bullet}$};
      \node[node] (LR) at ($(L)+(\dxii,-\dy)$)  {$\ket{\bullet}$};
      \node[node] (RL) at ($(R)+(-\dxii,-\dy)$) {$\ket{\bullet}$};
      \node[node] (RR) at ($(R)+(\dxii,-\dy)$)  {$\ket{\bullet}$};

      \node[leaf, leafdummy, label=below:{\small$\ket{L_0}$}] (LLL) at ($(LL)+(-\leafsep,-\dy)$)
        {$\ket{0}^{t}$\nodepart{two}$\ket{\bar\varphi_0}$};
      \node[leaf, leaflevzero, label=below:{\small$\ket{L_1}$}] (LLR) at ($(LL)+(\leafsep,-\dy)$)
        {$\ket{\theta_1}$\nodepart{two}$\ket{\bar\varphi_1}$};
      \node[leaf, leaflevone, label=below:{\small$\ket{L_2}$}] (LRL) at ($(LR)+(-\leafsep,-\dy)$)
        {$\ket{\theta_2}$\nodepart{two}$\ket{\bar\varphi_2}$};
      \node[leaf, leaflevone, label=below:{\small$\ket{L_3}$}] (LRR) at ($(LR)+(\leafsep,-\dy)$)
        {$\ket{\theta_3}$\nodepart{two}$\ket{\bar\varphi_3}$};
      \node[leaf, leaflevtwo, label=below:{\small$\ket{L_4}$}] (RLL) at ($(RL)+(-\leafsep,-\dy)$)
        {$\ket{\theta_4}$\nodepart{two}$\ket{\bar\varphi_4}$};
      \node[leaf, leaflevtwo, label=below:{\small$\ket{L_5}$}] (RLR) at ($(RL)+(\leafsep,-\dy)$)
        {$\ket{\theta_5}$\nodepart{two}$\ket{\bar\varphi_5}$};
      \node[leaf, leaflevtwo, label=below:{\small$\ket{L_6}$}] (RRL) at ($(RR)+(-\leafsep,-\dy)$)
        {$\ket{\theta_6}$\nodepart{two}$\ket{\bar\varphi_6}$};
      \node[leaf, leaflevtwo, label=below:{\small$\ket{L_7}$}] (RRR) at ($(RR)+(\leafsep,-\dy)$)
        {$\ket{\theta_7}$\nodepart{two}$\ket{\bar\varphi_7}$};

      \draw (root) -- ++(-\dx,0) -- (L);  \draw (root) -- ++(\dx,0)  -- (R);
      \draw (L)  -- ++(-\dxii,0) -- (LL); \draw (L)  -- ++(\dxii,0) -- (LR);
      \draw (R)  -- ++(-\dxii,0) -- (RL); \draw (R)  -- ++(\dxii,0) -- (RR);
      \draw (LL) -- ++(-\leafsep,0) -- (LLL); \draw (LL) -- ++(\leafsep,0) -- (LLR);
      \draw (LR) -- ++(-\leafsep,0) -- (LRL); \draw (LR) -- ++(\leafsep,0) -- (LRR);
      \draw (RL) -- ++(-\leafsep,0) -- (RLL); \draw (RL) -- ++(\leafsep,0) -- (RLR);
      \draw (RR) -- ++(-\leafsep,0) -- (RRL); \draw (RR) -- ++(\leafsep,0) -- (RRR);
    \end{tikzpicture}
  }
  \caption{BBQRAM mapping of the complex angle tree $\Gamma=(\Theta,\Phi)$ (Proposition~\ref{prop:memory_layout_complex}). Cell $L_0$ holds a dummy zero in the angle field; for $z\geq 1$, each cell stores one node of $\Theta$ as the rotation-angle field $\ket{\theta_z}$ and the corresponding leaf phase $\ket{\bar\varphi_z}$ from $\Phi$, for a total width of $2t$ bits. The angle field drives magnitude splitting through cascades of controlled $R_y$ gates; the phase field is applied through cascades of controlled phase gates.}
  \label{fig:complex_memory_layout}
\end{figure*}
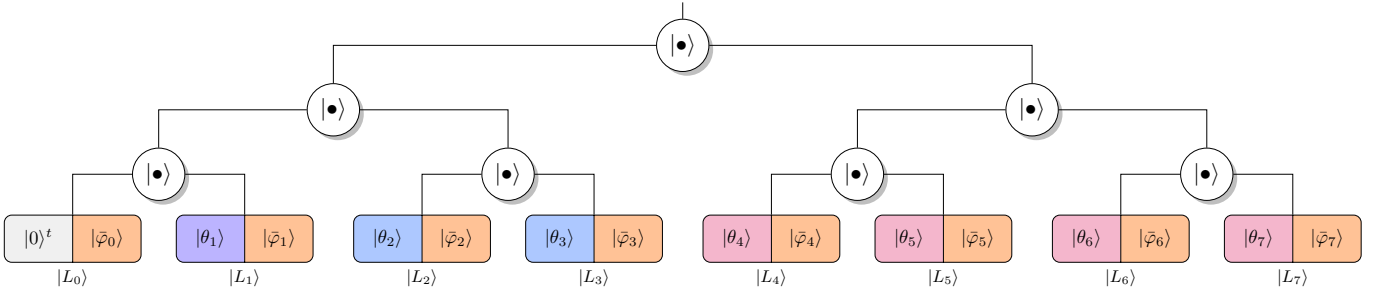

\subsection{Two-Step Algorithm}\label{sec:two_step_alg}
The algorithm operates in two steps: \emph{Step~1} prepares the moduli using Theorem~\ref{thm:magnitude_prep}, and \emph{Step~2} applies the complex phases via one additional BBQRAM query and a cascade of controlled $P$ gates. Algorithm~\ref{alg:complex} provides the pseudocode, and Theorem~\ref{thm:state_prep_complex} states the formal guarantee.

\begin{algorithm}[t]
\caption{State preparation for complex matrices}\label{alg:complex}
\KwIn{$A \in \mathbb{C}^{M \times N}$, $K=MN$, $k=\log_2 K$, precision $t$}
\KwOut{$\frac{1}{\|A\|_F}\sum_{i,j} a_{i,j}\ket{i}\ket{j}$}
\BlankLine
\tcc{Classical preprocessing}
\For{$z = 0$ \KwTo $K-1$}{
  $|a_z|^2 \leftarrow \Re(a_z)^2 + \Im(a_z)^2$\;
  $\bar\varphi_z \leftarrow 0$ if $a_z=0$, otherwise $\atantwo(\Im(a_z),\,\Re(a_z)) \bmod 2\pi$\;
}
Build segment tree $T$ on $\{|a_z|^2\}$\;
Write $\ket{\theta_0=0}\ket{\bar\varphi_0}$ into BBQRAM cell $0$\;
\For{$z = 1$ \KwTo $K-1$}{
  $\theta_z \leftarrow 0$ if $T_L(z)+T_R(z)=0$, otherwise $2\arcsin\!\bigl(\sqrt{\frac{T_R(z)}{T_L(z)+T_R(z)}}\bigr)$\;
  Write $\ket{\theta_z}\ket{\bar\varphi_z}$ into BBQRAM memory cell $z$\;
}
\BlankLine
\tcc{Step 1: Amplitude preparation}
Initialize $\ket{0}^t_{\mathrm{w}_\theta}\ket{0}^t_{\mathrm{w}_\varphi}\ket{0}_{\mathrm{v}}\ket{0\cdots01}_{\mathrm{a}}$\;
\For{$h = 1$ \KwTo $k$}{
  Retrieve $\ket{\theta_z}\ket{\bar\varphi_z}$ from BBQRAM into $\mathrm{w}_\theta,\mathrm{w}_\varphi$ \tcp*{$\mathcal{O}(\log_2 K)$}
  Apply cascade $\mathrm{C}_{\mathrm{w}_\theta}R_{y\mapsto\mathrm{v}}(\theta_z)$ \tcp*{$\mathcal{O}(t)$}
  Uncompute $\mathrm{w}_\theta,\mathrm{w}_\varphi$ \tcp*{$\mathcal{O}(\log_2 K)$}
  Left circular shift on $\mathrm{v},\mathrm{a}$ (network of~\cite{berti2025efficient})
}
\BlankLine
\tcc{Step 2: Phase encoding}
Retrieve $\ket{\theta_z}\ket{\bar\varphi_z}$ from BBQRAM into $\mathrm{w}_\theta,\mathrm{w}_\varphi$ \tcp*{$\mathcal{O}(\log_2 K)$}
Apply cascade $\mathrm{C}_{\mathrm{w}_\varphi}P_{\mapsto\mathrm{v}}(\bar\varphi_z)$ \tcp*{$\mathcal{O}(t)$}
Uncompute $\mathrm{w}_\theta,\mathrm{w}_\varphi$ \tcp*{$\mathcal{O}(\log_2 K)$}
\Return state on register $\mathrm{a}$\;
\end{algorithm}
\begin{theorem}[Efficient state preparation for complex matrices]\label{thm:state_prep_complex}
Given $A \in \mathbb{C}^{M \times N}$ with $M = 2^m$, $N = 2^n$, and $K = MN$, and a complex angle tree $\Gamma=(\Theta,\Phi)$ embedded in BBQRAM according to Proposition~\ref{prop:memory_layout_complex}, there exists a unitary procedure $E_A$ whose action on the initialized QPU registers is
\[
\begin{aligned}
E_A:\;&\ket{0}^t_{\mathrm{w}_\theta}\ket{0}^t_{\mathrm{w}_\varphi}
\ket{0}_{\mathrm{v}}\ket{0^{k-1}1}_{\mathrm{a}}\\
&\longmapsto
\ket{0}^t_{\mathrm{w}_\theta}\ket{0}^t_{\mathrm{w}_\varphi}\ket{1}_{\mathrm{v}}
\frac{1}{\|A\|_F}\sum_{z=0}^{K-1} a_z\ket{z}_{\mathrm{a}}^k .
\end{aligned}
\]
up to fixed-point. Equivalently, the address register contains $\frac{1}{\|A\|_F}\sum_{i,j}a_{i,j}\ket{i}^m\ket{j}^n$ after discarding the deterministic work registers. The BBQRAM query time is $\mathcal{O}(\log_2^2 K)$, using $k + 2t + 1$ qubits on the QPU.
\end{theorem}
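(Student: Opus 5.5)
The proof splits Algorithm~\ref{alg:complex} into its two steps and composes their actions. For Step~1 we invoke Theorem~\ref{thm:magnitude_prep}, together with the final case $h=k$ of the routing-marker invariant of Lemma~\ref{lem:routing_marker}. Under the layout of Proposition~\ref{prop:memory_layout_complex}, each retrieval now also writes $\bar\varphi_z$ into $\mathrm{w}_\varphi$. This extra field is harmless. The cascade is controlled only by $\mathrm{w}_\theta$, so it acts trivially on $\mathrm{w}_\varphi$. The reverse query XORs the same cell content back and returns both work registers to $\ket{0}^{2t}$ before the shift. Each iteration is therefore the same clean unitary as in Theorem~\ref{thm:magnitude_prep}. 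After $k$ iterations the state is
\[
\ket{0}^t_{\mathrm{w}_\theta}\ket{0}^t_{\mathrm{w}_\varphi}\ket{1}_{\mathrm{v}}\frac{1}{\|A\|_F}\sum_{z}|a_z|\ket{z}_{\mathrm{a}},
\]
using $T_{k,z}=|a_z|^2$.

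For Step~2 we use linearity over the basis states $\ket{z}_{\mathrm{a}}$. The address register now holds the leaf index $z\in\{0,\dots,K-1\}$, including $z=0$, which is exactly why $L_0$ carries $\bar\varphi_0$ in its phase field. One query maps each branch to $\ket{\theta_z}\ket{\bar\varphi_z}\ket{1}_{\mathrm{v}}\ket{z}$; for $z=0$ the angle field is the dummy $\ket{0}$. We then apply the controlled-$P$ cascade from $\mathrm{w}_\varphi$ onto $\mathrm{v}$. By Lemma~\ref{lem:cascade} with $R=P$ and increments $\Delta_j=2\pi 2^{j-t}$, it multiplies the $\ket{1}_{\mathrm{v}}$ branch by $e^{i\bar\varphi_z}$. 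Since $P$ is $2\pi$-periodic, this equals $e^{i\varphi_z}$ up to discretization error. The same query again uncomputes both registers, because the cascade is diagonal in the computational basis of the controls. The result is $|a_z|e^{i\varphi_z}=a_z$ on each branch, with the zero-entry convention irrelevant there. Since $\mathrm{v}$ is deterministically $\ket{1}$, the gate applies a relative phase rather than a global one. This is the subtle point I would state explicitly: without the marker qubit in $\ket{1}$, the phase would be lost.

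For the accounting, we compose the two parts. The total is $2k+2$ BBQRAM queries, each costing $\mathcal{O}(\log_2 K)$ by Lemma~\ref{lem:bbqram_cost} regardless of the $2t$ cell width, which gives $\mathcal{O}(\log_2^2K)$. The added cascades cost $\mathcal{O}(kt)$, which is lower order for fixed $t$. The QPU qubits are $k$ for the address, $2t$ for the two work registers and $1$ for $\mathrm{v}$. Finally, the fixed-point error should be bounded: each of the $k+1$ rotations is off by $\mathcal{O}(2^{-t})$ in angle, so the total error is $\mathcal{O}(k2^{-t})$ in norm by the triangle inequality over the unitaries. The main obstacle I expect is verifying cleanly that the Step~2 query hits the correct cell. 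This requires that after the last shift the address register holds exactly $\mathrm{bin}_k(z)$ with the marker moved into $\mathrm{v}$, which is the $h=k$ case of Lemma~\ref{lem:routing_marker}.
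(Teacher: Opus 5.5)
Your proposal is correct and follows essentially the same route as the paper's proof. Step~1 reduces to Theorem~\ref{thm:magnitude_prep} and the $h=k$ case of Lemma~\ref{lem:routing_marker}, with the loaded-but-unused phase field cleaned by the reverse query; Step~2 applies the controlled-$P$ cascade at the leaf address on the deterministic $\ket{1}_{\mathrm{v}}$ marker; and the $2k+2$ queries of cost $\mathcal{O}(\log_2 K)$ each, together with the $k+2t+1$ qubit count, complete the argument, while your $\mathcal{O}(k2^{-t})$ hybrid error bound matches the precision remark the paper places after its proof (Equation~\eqref{eq:prec_corr}).
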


The procedure operates on four registers: a $t$-qubit register $\ket{0}^t_{\mathrm{w}_\theta}$ for the rotation angle, a $t$-qubit register $\ket{0}^t_{\mathrm{w}_\varphi}$ for the phase angle, a $1$-qubit register $\ket{0}^1_{\mathrm{v}}$ for amplitude encoding, and a $k$-qubit address register $\ket{0}^k_{\mathrm{a}}$ that also stores the final output state.

\begin{proof}
Classical preprocessing computes the squared moduli, the fixed-point phases $\bar\varphi_z$, and the angle tree $\Theta$, then writes the cells of Proposition~\ref{prop:memory_layout_complex}. Step~1 is Theorem~\ref{thm:magnitude_prep} applied to the angle field; although the full complex cell is loaded and uncomputed, only $\mathrm{w}_\theta$ is used in the rotation cascade, and Lemma~\ref{lem:routing_marker} leaves $\mathrm{v}=\ket{1}$ with the address holding the leaf index. Step~2 retrieves the phase field at that leaf address and applies the controlled-$P$ cascade, which multiplies the branch by $e^{i\bar\varphi_z}=e^{i\varphi_z}$ because $\mathrm{v}=\ket{1}$. The reverse retrieval cleans the work registers, giving amplitudes $|a_z|e^{i\varphi_z}/\|A\|_F=a_z/\|A\|_F$.

The query count is $2k$ for the magnitude loop plus one phase query and its reverse. Hence the BBQRAM query time is $\mathcal{O}(\log_2^2 K)$, with the Step~2 $\mathcal{O}(\log_2 K)$ contribution lower order. The QPU registers are the $k$-qubit address, two $t$-qubit work registers, and the one-qubit target.
\end{proof}

The effect of fixed-point storage and rotation synthesis can be bounded by a standard
hybrid argument over the $k$ magnitude layers and the final phase layer. If the stored
angles satisfy $|\widetilde\theta_z-\theta_z|\leq\delta_\theta$, the stored phases satisfy
$d_{2\pi}(\widetilde\varphi_z,\varphi_z)\leq\delta_\varphi$, and the implemented
$R_y$ and phase cascades have operator-norm errors at most $\epsilon_y$ and
$\epsilon_\varphi$, respectively, then
\begin{equation}\label{eq:prec_corr}
\|\ket{\widetilde A}-\ket{A}\|
\lesssim
k\left(\frac{\delta_\theta}{2}+\epsilon_y\right)
+\delta_\varphi+\epsilon_\varphi .
\end{equation}
For nearest-grid rounding with spacings
$\Delta_\theta=2^{2-t}$ and $\Delta_\varphi=2\pi/2^t$, one has
$\delta_\theta\leq 2^{1-t}$ and $\delta_\varphi\leq \pi 2^{-t}$. Thus choosing
$t=\mathcal{O}(\log((k+1)/\eta))$ and synthesizing each cascade to error
$\mathcal{O}(\eta/(k+1))$ suffices to achieve target state error $\eta$.

Table~\ref{tab:complex_comparison} provides a comprehensive resource comparison.

\begin{table}[t]
\centering
\caption{Resource comparison separating the original real-valued method, the real-valued specialization of this work, and the complex-valued construction. ``QPU qubits'' excludes BBQRAM switches and memory cells. Rotation-synthesis costs are separate from the BBQRAM query count. The original method's $2k{+}1$ count corresponds to $2k$ queries for the magnitude loop plus a single phase-kickback sign step that requires no reverse query, whereas the new methods explicitly perform a final query and its reverse for the sign/phase field, giving $2k{+}2$.}
\label{tab:complex_comparison}
\scriptsize
\resizebox{\columnwidth}{!}{%
\begin{tabular}{lccc}
\toprule
\textbf{Resource} & \textbf{Original real~\cite{berti2025efficient}} & \textbf{This work, real} & \textbf{This work, complex} \\
\midrule
Input domain & $\mathbb{R}^{M \times N}$ & $\mathbb{R}^{M \times N}$ & $\mathbb{C}^{M \times N}$ \\
BBQRAM query time & $\mathcal{O}(\log_2^2 K)$ & $\mathcal{O}(\log_2^2 K)$ & $\mathcal{O}(\log_2^2 K)$ \\
QPU qubits & $k + 2t + 2$ + arith. & $k+t+2$ & $k + 2t + 1$ \\
QPU qubits at error $\eta$ & $k+2t_\eta+2$ + arith. & $k+t_\eta+2$ & $k+2t_\eta+1$ \\
Memory Cell width & $1 + 2t$ bits & $t+1$ bits & $2t$ bits \\
BBQRAM queries & $2k + 1$ & $2k + 2$ & $2k + 2$ \\
Rev.\ arithmetic & yes ($U_{2\mathrm{CR}}$) & none & none \\
Online rotations & $R_y$ cascade + arith. & $R_y$ cascade + $CZ$ & $R_y$ and $P$ cascades \\
Classical precomp. & $\mathcal{O}(K)$ & $\mathcal{O}(K\poly(t))$ & $\mathcal{O}(K\poly(t))$ \\
\bottomrule
\end{tabular}
}
\end{table}

The real-valued specialization adds a final query and its reverse to load the sign bit after magnitude preparation; in exchange, the arithmetic circuits and one of the two $t$-bit magnitude fields are removed. The extension to complex matrices replaces the sign bit by a $t$-bit phase field, adding $t$ QPU qubits for $\mathrm{w}_\varphi$ and $t$ additional bits per BBQRAM cell relative to the real specialization. For target state error $\eta$, the row with $t_\eta=\mathcal{O}(\log_2((k+1)/\eta))$ follows from 
Equation~\eqref{eq:prec_corr}, 
excluding synthesis ancillas. The Step~2 query accesses all $K$ memory cells in full superposition, whereas each Step~1 iteration accesses one segment-tree level; in the hardware-BBQRAM model, both are $\mathcal{O}(\log_2 K)$-time queries by Lemma~\ref{lem:bbqram_cost}. The query table does not include the fault-tolerant synthesis cost of the controlled rotations; Equation~\ref{eq:prec_corr} 
states the corresponding precision dependence.

\section{Conclusion and Future Work}\label{sec:conclusion}
This work showed that the BBQRAM-based quantum state preparation framework of~\cite{berti2025efficient} admits two concrete improvements. First, classical precomputation of segment-tree rotation angles eliminates the reversible arithmetic inside $U_{2\mathrm{CR}}$ from the magnitude-preparation loop. Second, a two-step magnitude-then-phase procedure extends the same architecture-aware framework to complex-valued matrices. The real signed case is handled as a one-bit phase specialization, avoiding any need to assign signs to internal segment-tree subtrees. Both improvements preserve the $\mathcal{O}(\log_2^2(MN))$ BBQRAM query-time bound.

We observe that the $\mathcal{O}(\log_2^2 K)$ time bound is contingent on the hardware-based BBQRAM model with pipelined routing. In~\cite{Jaques2025qramsurveycritique}, the authors show that any \emph{circuit}-level BBQRAM implementation faces fundamental lower bounds --- $\Omega(2^n)$ logical gates and $\Omega(\sqrt{2^n})$ non-Clifford gates --- that preclude polylogarithmic gate counts in the circuit model. Building a physical BBQRAM device and interfacing it with a generic quantum processor thus remains an open engineering challenge, although recent works report concrete progress toward practical implementations~\cite{Hann_2019, cesa2025, dalzell2025, Wang_2025, PRXQuantum.5.020312, PRXQuantum.2.030319, deriso2025}.

Within this hardware model, our two contributions have complementary significance. Eliminating $U_{2\mathrm{CR}}$ is primarily a practical improvement: the asymptotic query complexity is unchanged, but the quantum circuit no longer contains reversible adder, divider, square-root, and arcsine subroutines. The complex extension is primarily a modeling contribution: it gives an explicit phase-loading layer for the same architecture, extending the framework's applicability to domains in which complex data arises naturally.

The intended downstream regime is repeated preparation of the same data, where the $\mathcal{O}(K\poly(t))$ preprocessing and BBQRAM-writing cost can be amortized across many calls.  Concerning scaling to realistic problem sizes, the per-iteration QPU structure --- one BBQRAM retrieval followed by a cascade of controlled rotations --- is independent of $K$ and does not change across Table~\ref{tab:scaling}. The table should be read as a QPU-register and BBQRAM-memory accounting exercise, not as a full feasibility estimate: it excludes physical BBQRAM switches, error correction, control hardware, magic-state factories, and rotation-synthesis cost.

\begin{table}[t]
\centering
\caption{Resource counts for complex-valued amplitude encoding of $K$ amplitudes at precision $t = 32$ bits per angle (Theorem~\ref{thm:state_prep_complex}): QPU qubits $= k+2t+1$, BBQRAM memory $= 2tK$ classical bits, queries $= 2k+2$, query time $\mathcal{O}(\log_2^2 K)$ (Lemma~\ref{lem:bbqram_cost}).}
\label{tab:scaling}
\small
\begin{tabular}{rrrrr}
\toprule
\textbf{$K$} & \textbf{$k$} & \textbf{QPU qubits} & \textbf{BBQRAM (bits)} & \textbf{Queries}\\
\midrule
$2^{10}$ & $10$ & $75$ & $\approx 6.6{\times}10^4$ & $22$ \\
$2^{20}$ & $20$ & $85$ & $\approx 6.7{\times}10^7$ & $42$ \\
$2^{30}$ & $30$ & $95$ & $\approx 6.9{\times}10^{10}$ & $62$ \\
\bottomrule
\end{tabular}
\end{table}
Several directions for future work are worth exploring. First, the present algorithm assumes classical data: changing one entry requires updating the corresponding phase, all angles along the affected root-to-leaf path, and the BBQRAM contents. We do not provide a coherent dynamic-update primitive. Second, specialized memory layouts for sparse matrices could shrink the BBQRAM size from $\mathcal{O}(MN)$ to $\mathcal{O}(\mathrm{nnz}(A))$ by storing only non-zero entries, and would require a modified address-space mapping and a segment-tree construction that handles variable-length branches. Third, adaptive-precision schemes that allow the bit width $t$ to vary per entry could reduce memory requirements while preserving overall accuracy, at the price of a non-uniform cascade structure. Fourth, this work prepares an amplitude-encoded state; it does not by itself provide row/column access or a block encoding of $A$. Extending the framework to block encodings~\cite{sunderhauf2024block} would broaden its applicability to quantum singular-value-transformation-based algorithms~\cite{gilyen2019quantum}. Eventually, a full fault-tolerant resource estimate should combine the query model used here with a concrete physical BBQRAM architecture and rotation-synthesis costs. As quantum hardware matures, architecture-aware approaches to state preparation --- such as the one presented in this work --- offer a complementary perspective to circuit-based methods and can inform the co-design of quantum software and memory architectures.

\section{Numerical Example}\label{sec:numerical_example}
In this section, we illustrate the algorithm on the complex-valued matrix
\[
A \in \mathbb{C}^{2 \times 4} = \begin{bmatrix}
2+i & -1+2i & 3 & -i \\
1-i & 2i & -2+i & 1+i
\end{bmatrix},
\]
for which $M = 2$, $N = 4$, $K = MN = 8$, $k = \log_2 K = 3$, and $\|A\|_F = \sqrt{33}$. We trace the classical preprocessing, Step~1 (amplitude preparation), and Step~2 (phase encoding) step by step.

\paragraph{Classical preprocessing}
The squared moduli of the entries are $[5, 5, 9, 1, 2, 4, 5, 2]$. The precomputed angles $\theta_z$ derived from the sibling sums populate $\Theta$, and the phases $\bar\varphi_z=\atantwo(\Im(a_z),\Re(a_z))$ populate the leaf layer $\Phi$; both are tabulated jointly in the BBQRAM-cell layout of Figure~\ref{fig:bbqram_layout_example}.

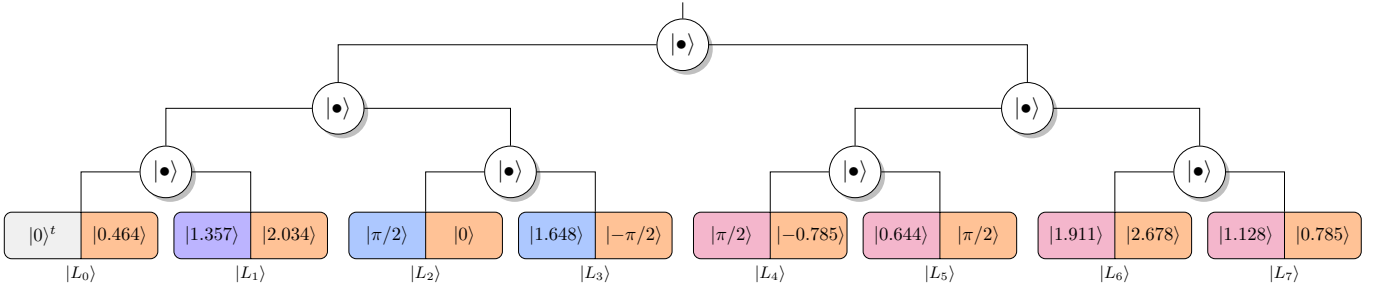
\begin{figure*}[ht]
  \centering
  \resizebox{\textwidth}{!}{%
    \begin{tikzpicture}[
        every label/.append style={font=\large},
        node/.style={
          draw, fill=white, circle, inner sep=0pt, font=\large,
          drop shadow, text width=2.5em, align=center
        },
        leaf/.style={
          shape=rectangle split, rectangle split parts=2,
          rectangle split draw splits, rectangle split horizontal,
          rounded corners, draw, inner sep=3pt,
          minimum width=4.75em, minimum height=2.5em,
          font=\normalsize, text width=3.5em, align=center
        },
        leafdummy/.style={
          rectangle split part fill={neutral-left,PastelIBMOrange}
        },
        leaflevzero/.style={
          rectangle split part fill={PastelIBMPurple,PastelIBMOrange}
        },
        leaflevone/.style={
          rectangle split part fill={PastelIBMBlue,PastelIBMOrange}
        },
        leaflevtwo/.style={
          rectangle split part fill={PastelIBMPink,PastelIBMOrange}
        },
      ]
      \def\dx{6.5} \def\dy{1.2} \def\dxii{3.25} \def\leafsep{1.6}

      \node[node] (root) at (0,0) {$\ket{\bullet}$};
      \coordinate (above) at ($(root)+(0,\dy/1.5)$);
      \draw (above) -- (root);
      \node[node] (L)  at ($(root)+(-\dx,-\dy)$) {$\ket{\bullet}$};
      \node[node] (R)  at ($(root)+(\dx,-\dy)$)  {$\ket{\bullet}$};
      \node[node] (LL) at ($(L)+(-\dxii,-\dy)$) {$\ket{\bullet}$};
      \node[node] (LR) at ($(L)+(\dxii,-\dy)$)  {$\ket{\bullet}$};
      \node[node] (RL) at ($(R)+(-\dxii,-\dy)$) {$\ket{\bullet}$};
      \node[node] (RR) at ($(R)+(\dxii,-\dy)$)  {$\ket{\bullet}$};

      \node[leaf, leafdummy, label=below:{\small$\ket{L_0}$}] (LLL) at ($(LL)+(-\leafsep,-\dy)$)
        {$\ket{0}^{t}$\nodepart{two}$\ket{0.464}$};
      \node[leaf, leaflevzero, label=below:{\small$\ket{L_1}$}] (LLR) at ($(LL)+(\leafsep,-\dy)$)
        {$\ket{1.357}$\nodepart{two}$\ket{2.034}$};
      \node[leaf, leaflevone, label=below:{\small$\ket{L_2}$}] (LRL) at ($(LR)+(-\leafsep,-\dy)$)
        {$\ket{\pi/2}$\nodepart{two}$\ket{0}$};
      \node[leaf, leaflevone, label=below:{\small$\ket{L_3}$}] (LRR) at ($(LR)+(\leafsep,-\dy)$)
        {$\ket{1.648}$\nodepart{two}$\ket{-\pi/2}$};
      \node[leaf, leaflevtwo, label=below:{\small$\ket{L_4}$}] (RLL) at ($(RL)+(-\leafsep,-\dy)$)
        {$\ket{\pi/2}$\nodepart{two}$\ket{-0.785}$};
      \node[leaf, leaflevtwo, label=below:{\small$\ket{L_5}$}] (RLR) at ($(RL)+(\leafsep,-\dy)$)
        {$\ket{0.644}$\nodepart{two}$\ket{\pi/2}$};
      \node[leaf, leaflevtwo, label=below:{\small$\ket{L_6}$}] (RRL) at ($(RR)+(-\leafsep,-\dy)$)
        {$\ket{1.911}$\nodepart{two}$\ket{2.678}$};
      \node[leaf, leaflevtwo, label=below:{\small$\ket{L_7}$}] (RRR) at ($(RR)+(\leafsep,-\dy)$)
        {$\ket{1.128}$\nodepart{two}$\ket{0.785}$};

      \draw (root) -- ++(-\dx,0) -- (L);  \draw (root) -- ++(\dx,0)  -- (R);
      \draw (L)  -- ++(-\dxii,0) -- (LL); \draw (L)  -- ++(\dxii,0) -- (LR);
      \draw (R)  -- ++(-\dxii,0) -- (RL); \draw (R)  -- ++(\dxii,0) -- (RR);
      \draw (LL) -- ++(-\leafsep,0) -- (LLL); \draw (LL) -- ++(\leafsep,0) -- (LLR);
      \draw (LR) -- ++(-\leafsep,0) -- (LRL); \draw (LR) -- ++(\leafsep,0) -- (LRR);
      \draw (RL) -- ++(-\leafsep,0) -- (RLL); \draw (RL) -- ++(\leafsep,0) -- (RLR);
      \draw (RR) -- ++(-\leafsep,0) -- (RRL); \draw (RR) -- ++(\leafsep,0) -- (RRR);
    \end{tikzpicture}
  }
  \caption{BBQRAM-cell layout for the example matrix, instantiating Figure~\ref{fig:complex_memory_layout} with the precomputed angles $\theta_z$ and phases $\bar\varphi_z$ derived above. Cell $\ket{L_0}$ stores a dummy zero in the angle field and the phase $\bar\varphi_0\approx 0.464$; for $z\geq 1$, each cell holds the angle $\theta_z=\Theta_{l(z)-1,d(z)}$ alongside the phase $\bar\varphi_z$.}
  \label{fig:bbqram_layout_example}
\end{figure*}
The precomputed rotation angles $\theta_z$ (via Definition~\ref{def:theta_precompute}) and the phase angles $\varphi_z$ (via $\atantwo$) are tabulated together with the BBQRAM memory cell contents later in this section.

\paragraph{Construction of $\Theta$ and $\Phi$}
The angle tree $\Theta$ has depth $\log_2 K - 1 = 2$ and $K-1 = 7$ nodes. The seven precomputed angles $\theta_1,\dots,\theta_7$ are obtained from Definition~\ref{def:theta_precompute} via the sibling-sum ratios of $T$.
The phase layer $\Phi=(\bar\varphi_0,\dots,\bar\varphi_7)$ is obtained directly from the entries via $\bar\varphi_z = \atantwo(\Im(a_z),\Re(a_z))\bmod 2\pi$. Figure~\ref{fig:bbqram_layout_example} shows $\Theta$ and $\Phi$ jointly in the BBQRAM-cell layout for this example: each cell stores $2t$ bits, namely the angle field $\theta_z=\Theta_{l(z)-1,d(z)}$ (with $\theta_0=0$ a dummy) and the phase field $\bar\varphi_z$ (signed representatives in $(-\pi,\pi]$; the BBQRAM stores the equivalent values modulo $2\pi$).


\paragraph{Step~1: Amplitude preparation}
The initial state is $\ket{0}_{\mathrm{w}_\theta}^t\ket{0}_{\mathrm{w}_\varphi}^t\ket{0}_{\mathrm{v}}\ket{001}_{\mathrm{a}}$, with the address register pointing to memory cell $\ket{L_1}$. We trace through the three iterations. For conciseness, in what follows we display only the address register; the work registers $\mathrm{w}_\theta,\mathrm{w}_\varphi,\mathrm{v}$ start in $\ket{0}$, are uncomputed at the end of each query, and are discarded at the conclusion of the procedure (Theorem~\ref{thm:state_prep_complex}).

\emph{Iteration $h=1$:} the address $\ket{001}$ accesses memory cell $\ket{L_1}$, loading $\theta_1 \approx 1.357$ into $\mathrm{w}_\theta$. The cascade $R_y(\theta_1)$ splits the target qubit with $\cos(\theta_1/2) = \sqrt{20/33}$ and $\sin(\theta_1/2) = \sqrt{13/33}$, correctly assigning probability $20/33$ to the left subtree ($T_{1,0} = 20$) and $13/33$ to the right subtree ($T_{1,1} = 13$). After uncomputing $\mathrm{w}_\theta$ and performing the left circular shift, $\mathrm{v}$ moves into $\mathrm{a}$ and a fresh $\ket{0}$ enters $\mathrm{v}$, yielding
\[
\frac{1}{\sqrt{33}}\Bigl(\sqrt{20}\,\ket{010}_{\mathrm{a}} + \sqrt{13}\,\ket{011}_{\mathrm{a}}\Bigr).
\]
The addresses $\ket{010}$ and $\ket{011}$ now point to cells $\ket{L_2}$ and $\ket{L_3}$ at the next level.

\emph{Iteration $h=2$:} the address register is in superposition over $\ket{010}$ and $\ket{011}$, and the BBQRAM routes these two addresses simultaneously. This loads $\theta_2 = \pi/2$ (for address $010$) and $\theta_3 \approx 1.648$ (for address $011$) into $\mathrm{w}_\theta$ in superposition. After the cascade and uncomputation, the $\sqrt{20}$ branch splits into $\sqrt{10}\ket{0} + \sqrt{10}\ket{1}$ (equal splitting since $T_{2,0} = T_{2,1} = 10$), while the $\sqrt{13}$ branch splits into $\sqrt{6}\ket{0} + \sqrt{7}\ket{1}$ (since $T_{2,2} = 6$ and $T_{2,3} = 7$). After the circular shift, the state becomes $$\frac{1}{\sqrt{33}}(\sqrt{10}\,\ket{100} + \sqrt{10}\,\ket{101} + \sqrt{6}\,\ket{110} + \sqrt{7}\,\ket{111}).$$

\emph{Iteration $h=3$:} the address register accesses cells $\ket{L_4}$--$\ket{L_7}$ in a four-way superposition. After the cascade, uncomputation, and final circular shift, the state becomes
\begin{multline*}
\frac{1}{\sqrt{33}}\bigl(\sqrt{5}\ket{000} + \sqrt{5}\ket{001} + 3\ket{010} + \ket{011}+\\
 + \sqrt{2}\ket{100} + 2\ket{101} + \sqrt{5}\ket{110} + \sqrt{2}\ket{111}\bigr),
\end{multline*}
which encodes the moduli $|a_z|$ as amplitudes, with $\mathrm{v} = \ket{1}$ for all terms. We verify that $(\sqrt{5})^2 + (\sqrt{5})^2 + 9 + 1 + 2 + 4 + 5 + 2 = 33 = \|A\|_F^2$, confirming normalization.

\paragraph{Step~2: Phase encoding}
A single BBQRAM query loads the eight phase angles into $\mathrm{w}_\varphi$ and the cascade of controlled $P$ gates applies $e^{i\varphi_z}$ to each term via $P(\varphi_z)\ket{1}_{\mathrm{v}} = e^{i\varphi_z}\ket{1}_{\mathrm{v}}$. Multiplying each modulus from Step~1 by its phase factor gives the polar form
\begin{multline*}
\frac{1}{\sqrt{33}}\bigl(\sqrt{5}\,e^{i\cdot 0.464}\ket{000} + \sqrt{5}\,e^{i\cdot 2.034}\ket{001}+\\
+ 3\ket{010} + e^{-i\pi/2}\ket{011} + \sqrt{2}\,e^{-i\pi/4}\ket{100}+\\
+ 2\,e^{i\pi/2}\ket{101} + \sqrt{5}\,e^{i\cdot 2.678}\ket{110} + \sqrt{2}\,e^{i\pi/4}\ket{111}\bigr),
\end{multline*}
which coincides term-by-term via $|a_z|\,e^{i\varphi_z} = a_z$, and re-indexing the leaf address $z\in\{0,\dots,7\}$ as $z=iN+j$ with $i\in\{0,1\}$, $j\in\{0,1,2,3\}$, yields
\begin{multline*}
\frac{1}{\sqrt{33}}\bigl((2+i)\ket{0}\ket{0} + (-1+2i)\ket{0}\ket{1} + 3\ket{0}\ket{2} - i\,\ket{0}\ket{3}+\\
+ (1-i)\ket{1}\ket{0} + 2i\,\ket{1}\ket{1} + (-2+i)\ket{1}\ket{2} + (1+i)\ket{1}\ket{3}\bigr).
\end{multline*}

\section*{Acknowledgment}
This work is supported by the National Centre on HPC, Big Data and Quantum Computing - SPOKE 10 (Quantum Computing) and received funding from the European Union Next-GenerationEU - National Recovery and Resilience Plan (NRRP) -- MISSION \emph{4} COMPONENT \emph{2}, INVESTMENT N.\emph{1.4} -- CUP N. \emph{I53C22000690001}. A. Berti was supported by INdAM - GNCS Project N. \emph{E53C24001950001}. F. Ghisoni was supported by the `National Quantum Science Technology Institute' (NQSTI, PE4) within the PNRR project \emph{PE0000023}.

\bibliographystyle{IEEEtran}
\bibliography{main}

\end{document}